\documentclass[lettersize,journal]{IEEEtran}
\usepackage{array}
\usepackage[caption=true,font=footnotesize,labelfont=rm,textfont=rm]{subfig} 
\usepackage{textcomp}
\usepackage{stfloats}
\usepackage{url}
\usepackage{verbatim}
\usepackage{graphicx}
\hyphenation{op-tical net-works semi-conduc-tor IEEE-Xplore}
\def\BibTeX{{\rm B\kern-.05em{\sc i\kern-.025em b}\kern-.08em
    T\kern-.1667em\lower.7ex\hbox{E}\kern-.125emX}}
\usepackage{balance}

\usepackage{amssymb,amsfonts} 
\usepackage[tbtags]{amsmath}

\usepackage[ruled,linesnumbered]{algorithm2e} 

\usepackage{color} 
\usepackage{booktabs} 
\usepackage[thmmarks,amsmath]{ntheorem} 
\usepackage{setspace} 
\usepackage{hyperref}
\usepackage{cite}
\usepackage{soul}
\soulregister{\cite}7 
\captionsetup{font={footnotesize}} 

{
    \theoremstyle{nonumberplain} 
    \theoremheaderfont{\bfseries\itshape} 
    \theorembodyfont{\normalfont} 
    \theoremseparator{:} 
    \theoremsymbol{\ensuremath{\blacksquare}} 
    \newtheorem{proof}{Proof}
}
{
    \theoremstyle{plain} 
    \theoremheaderfont{\bfseries\itshape} 
    \theorembodyfont{\normalfont} 
    \theoremseparator{:} 
    \theoremsymbol{} 
    
    \newtheorem{definition}{Definition}
    \newtheorem{lemma}{Lemma}
    \newtheorem{theorem}{Theorem}
}

\newcommand{\algref}[1]{\textit{\textbf{Algorithm \ref{#1}}}}
\newcommand{\theoref}[1]{\textit{\textbf{Theorem \ref{#1}}}}
\newcommand{\lemmref}[1]{\textit{\textbf{Lemma \ref{#1}}}}
\newcommand{\defref}[1]{\textit{\textbf{Definition \ref{#1}}}}
\newcommand{\figref}[1]{Fig. \ref{#1}}

\begin{document}
\title{Flexible Base Station Sleeping and Resource Allocation for Green Uplink Fully-Decoupled RAN}

\author{Yu Sun,~\IEEEmembership{Member,~IEEE}, Haibo Zhou,~\IEEEmembership{Senior Member,~IEEE}, Kai Yu,~\IEEEmembership{Member,~IEEE}, Yunting Xu,~\IEEEmembership{Member,~IEEE}, Bo Qian,~\IEEEmembership{Member,~IEEE}, and Lin X. Cai,~\IEEEmembership{Senior Member,~IEEE}
	\thanks{This work is supported in part by the National Natural Science Foundation of China under Grant 62271244 and the Jiangsu Province Innovation and Entrepreneurship Team Project under Grant JSSCTD202202. (\textit{Corresponding author: Haibo Zhou.})\\
		Y. Sun and H. Zhou are with the School of Electronic Science and Engineering, Nanjing University, Nanjing, China, 210023 (e-mail: \href{yusun@smail.nju.edu.cn}{yusun@smail.nju.edu.cn}, \href{mailto:haibozhou@nju.edu.cn}{haibozhou@nju.edu.cn}).\\
		K. Yu is with the School of Electrical Engineering and Computer Science, KTH Royal Institute of Technology, 114 28 Stockholm, Sweden (e-mail: \href{kayu@kth.se}{kayu@kth.se}).\\
		Y. Xu is with the College of Computing and Data Science, Nanyang Technological University, Singapore, 639798 (e-mail: \href{yunting.xu@ntu.edu.sg}{yunting.xu@ntu.edu.sg}).\\
		B. Qian is with the Information Systems Architecture Science Research Division, National Institute of Informatics, Tokyo 101-8430, Japan (e-mail: \href{boqian@ieee.org}{boqian@ieee.org}).\\
		L. X. Cai is with the Department of Electrical and Computer Engineering, Illinois Institute of Technology, Chicago, IL 60616, USA (e-mail: \href{lincai@iit.edu}{lincai@iit.edu}).}
}

\maketitle

\begin{abstract}
	The fully-decoupled radio access network (FD-RAN) is an innovative architecture designed for next-generation mobile communication networks, featuring decoupled control and data planes as well as separated uplink and downlink transmissions.
	To further enhance energy efficiency, this paper explores a green approach to FD-RAN by incorporating adaptive base station (BS) sleeping and resource allocation.
	First, we introduce a holistic power consumption model and formulate a  energy efficiency maximization problem for FD-RAN, involving joint optimization of user equipment (UE) association, BS sleeping, and power control.
	Subsequently, the optimization problem is decomposed into two subproblems. The first subproblem, involving UE power control, is solved using a successive lower-bound maximization approach based on Dinkelbach's algorithm. The second subproblem, addressing UE association and BS sleeping, is tackled via a modified, low-complexity many-to-many swap matching algorithm.
	Extensive simulation results demonstrate the superior effectiveness of FD-RAN with our proposed algorithms, revealing the sources of energy efficiency gains.
\end{abstract}

\begin{IEEEkeywords}
	FD-RAN, energy efficiency, BS sleeping, resource allocation,  optimization, many-to-many matching theory.
\end{IEEEkeywords}

\section{Introduction}
\IEEEPARstart{T}{he} sixth-generation network is expected to meet the growing demand for wireless communications through denser deployments, cloud and edge computing, and the integration of artificial intelligence \cite{zhangHandoverfreeMobilityManagement2024,10679152,LAIM_CP}. However, these advancements result in a significant increase in energy consumption \cite{chenEvolutionRANArchitectures2024,10531073}.
Notably, Vodafone reports that network base station (BS) sites contribute to nearly 73\% of the total energy consumption, and this trend is on the rise \cite{2022ESGAddendum2022}. Thus, reducing the energy consumption of BSs is crucial for developing environmentally friendly and sustainable wireless communication networks.
One potential approach is the adoption of sleep mechanisms for BSs, which allow underutilized BSs to enter sleep mode and offload their traffic to nearby BSs cooperatively, leading to significant energy savings \cite{liuDNNPartitioningTask2024a}. However, implementing BS sleeping may introduce challenges, particularly the creation of coverage holes when a BS enters a sleep mode and temporarily stops serving wireless users. These coverage holes can disrupt seamless service and degrade the overall network performance and user experience.

In a traditional cellular network where the uplink (UL) and downlink (DL) are tightly coupled, legacy BSs can only sleep when both are idle. This limitation reduces the effectiveness of BS sleeping, preventing energy savings during periods when only UL or DL is inactive.
In contrast, the fully-decoupled radio access network (FD-RAN) \cite{chenEvolutionRANArchitectures2024,xuFullyDecoupledRANFeedbackFree2025a}, a novel and disruptive architecture for next-generation mobile communication networks, is poised to address these challenges.
In an FD-RAN, BSs are decoupled into control BSs (CBSs), uplink BSs (UBSs), and downlink BSs (DBSs), allowing for the full decoupling of the control and data planes, as well as uplink and downlink transmissions. This decoupling improves BS sleeping by enabling BSs to independently manage their sleep modes for control and data transmissions, enhancing energy efficiency by allowing them to sleep when only one plane or direction is active.
Under this architecture, CBSs remains active and provide always-on and ubiquitous coverage, while UBSs and DBSs can dynamically enter sleep mode based on the traffic demands. This fundamentally resolves the coverage hole issue caused by BS sleeping in traditional networks and allows UBSs and DBSs to sleep independently, enabling an optimal sleep strategy.
Furthermore, the inherent multi-connectivity mode combined with adaptive resource allocation is expected to further support BS sleeping and improve energy efficiency \cite{sunJointUserAssociation2023,zhaoFullydecoupledRadioAccess2023,xueCooperativeDeepReinforcement2024}.

Although BS sleeping has been explored in traditional architectures \cite{linDatadrivenBaseStation2024a,masoudiDigitalTwinAssisted2023,zhouJointUserAssociation2024,salahdineSurveySleepMode2021,liuDeepNapDataDrivenBase2018}, little research has been done in the context of FD-RAN. Generally, several unique challenges arise when optimizing BS sleeping in FD-RAN architectures:
1) The holistic modeling of energy consumption in an FD-RAN, which is crucial for energy efficiency studies, is yet to be fully developed;
2) The intrinsic multi-connectivity in an FD-RAN significantly increases the complexity of the NP-hard BS sleeping problem, expanding it from $2^M$ to $2^{MK}$, where $M$ and $K$ denote the number of BSs and UEs, respectively;
3) Realistic power consumption models and multi-connectivity considerations further complicate the non-convex energy efficiency maximization problem.

In this work, we study adaptive UBS sleeping and resource allocation in an uplink FD-RAN to maximize the overall network energy efficiency.
To strike a balance between accuracy and tractability in assessing energy efficiency, we first propose a holistic power consumption model for an FD-RAN. Based on the developed power model, we formulate an energy efficiency maximization problem as a mixed-integer nonlinear programming (MINLP) problem. The objective of the problem is to maximize network energy efficiency while ensuring user equipment (UE) quality of service (QoS) by optimizing the UE association, UBS sleeping schedule, and transmission power control. The formulated MINLP problem is NP-hard and can be decomposed into two subproblems. The firsty subproblem involving UE power control in continuous space is solved using the successive lower-bound maximization based Dinkelbach's (SLMDB) algorithm \cite{dinkelbachNonlinearFractionalProgramming1967}.
The second subproblem, which involves UE association and BS sleeping, is addressed using a modified many-to-many swap matching algorithm (TriMSM) with a low-complexity implementation, striking a favorable balance between performance and computational efficiency.

The major contributions of this paper are summarized as follows:
\begin{itemize}
	\item We first develop a power consumption model that captures the key components of the FD-RAN infrastructure and UEs. Based on this model, we formulate an energy efficiency maximization problem, considering the multi-connectivity and power control, while ensuring QoS for UEs.
	\item To solve the complex optimization problem, we decompose it into two subproblems: the power control subproblem and the UE association with UBS sleeping subproblem.
	\item We propose the SLMDB algorithm to address the continuous yet non-convex power control subproblem. The subproblem is reformulated as a sequence of lower-bounded concave-convex fractional programs with guaranteed global convergence, which are then solved using the Dinkelbach's algorithm.
	\item We propose the TriMSM algorithm and leverage its low-complexity realizations to address the nonlinear integer UE association and UBS sleeping subproblem. A modified many-to-many swap matching algorithm is presented, and three alternative low-complexity power control algorithms are proposed to ensure superior performance while significantly reducing overall computational complexity.
	\item Extensive simulations validate the effectiveness of the proposed algorithms for a FD-RAN, and reveal the underlying sources of energy efficiency improvements. Specifically, the proposed approach achieves at least an 18.9\% gain in energy efficiency compared to conventional architectures, and at least a 6.6\% improvement over some baseline algorithms in the literature.
\end{itemize}

The remainder of this paper is organized as follows. Section~\ref{sec:related_works} reviews the related works in the literature. Section~\ref{sec:system_model} presents the network model, the data communication model, and the power consumption model for FD-RAN. The energy efficiency maximization problem is formulated in Section~\ref{sec:problem_formulation}, along with the overall solution framework. Detailed algorithms for the subproblems are presented in Section~\ref{sec:SLMDB} and Section~\ref{sec:joint_alg}, respectively. Extensive simulation results are provided in Section~\ref{sec:simulation_results}, followed by concluding remarks in Section~\ref{sec:conclusion}.

\section{Related Works} \label{sec:related_works}

The construction of accurate power consumption models is crucial for evaluating the energy efficiency of communication systems, and a significant body of work has been dedicated to this area~\cite{debaillieFlexibleFutureProofPower2015,auerHowMuchEnergy2011,dessetFlexiblePowerModeling2012,fioraniModelingEnergyPerformance2016,basharEnergyEfficiencyCellFree2019,vanchienJointPowerAllocation2020}.
However, these models cannot be directly applied to FD-RAN, and existing studies only focus on a part of network power consumption, lacking comprehensiveness.
For instance, Auer et al. \cite{auerHowMuchEnergy2011} proposed several power models for BSs, covering macro, micro, pico, and femto cells; however, their models are limited to BSs and exclude other network components. Similarly, Bashar et al. \cite{basharEnergyEfficiencyCellFree2019} considered the power consumption of BSs, UEs, and backhauls, but their study only applies to distributed networks, excluding edge clouds. Fiorani et al. \cite{fioraniModelingEnergyPerformance2016} developed models for the radio network and optical transport network, incorporating centralized control and processing, but simplified other components and omitted UEs. Moreover, Vanchien et al. \cite{vanchienJointPowerAllocation2020} explored power consumption in fronthauls, yet their model still fails to account for UEs and edge clouds. Debaillie et al. \cite{debaillieFlexibleFutureProofPower2015} and Desset et al. \cite{dessetFlexiblePowerModeling2012} investigated the impact of BS sleeping on power consumption, but their models are only applicable to legacy BSs and exclude other components.
Additionally, some power models \cite{basharEnergyEfficiencyCellFree2019,vanchienJointPowerAllocation2020} prioritize simplicity for ease of implementation, sacrificing accuracy in the process. In contrast, models based on real data \cite{debaillieFlexibleFutureProofPower2015,dessetFlexiblePowerModeling2012} offer higher accuracy but are often more complex and difficult to solve.
Therefore, to assess energy efficiency effectively, a comprehensive approach is required that balances reliable data with manageable complexity, ensuring both accuracy and tractability.
Our proposed model for power consumption incorporates key components of the FD-RAN infrastructure, including BSs, fronthauls, edge clouds, and UEs, as well as being specifically designed for FD-RANs.

BS sleeping, a potential method for substantial energy savings, encounters implementation challenges, with coverage holes emerging as a significant concern during sleeping \cite{wuEnergyEfficientBaseStationsSleepMode2015}. Various approaches have been proposed to tackle this challenge.
Lin et al. developed a spatio-temporal traffic prediction model aimed at capturing traffic characteristics to efficiently manage arriving UE traffic in BS sleeping schemes \cite{linDatadrivenBaseStation2024a}.
To optimize the timing of BS sleeping, Masoudi et al. \cite{masoudiDigitalTwinAssisted2023} utilized a digital twin model to encapsulate the dynamic system behavior and estimate risks in advance.
Zhou et al. introduced a BS sleeping scheme ensuring continuous coverage by macro BSs while allowing small BSs to dynamically sleep for energy conservation \cite{zhouJointUserAssociation2024}.
Additionally, recent studies explore promising techniques like UE association, self-organizing networks, and cell zooming \cite{salahdineSurveySleepMode2021}.
Nevertheless, heterogeneous deployment introduces additional costs and energy consumption due to the need for diverse infrastructure, integration challenges, and the varying power requirements of different network components. Other techniques, while mitigating the adverse effects of coverage holes, fail to address the problem fundamentally. Furthermore, the coupled uplink and downlink transmission within these works hinders optimal BS sleeping.

Resource allocation plays a pivotal role in improving the efficiency and reliability of wireless communication networks. To achieve optimal resource allocation, the problem is typically modeled as a complex optimization challenge. In an effort to tackle this complexity, Ma et al. \cite{maUAVLEOIntegratedBackbone2021} utilized the block coordinate descent (BCD) based algorithm to decompose the problem into sub-problems, and alternatively solve them until convergence.
Within the domain of energy efficiency, problems are often formulated as fractional programming. Shen et al. \cite{shenFractionalProgrammingCommunication2018} highlighted a specific subset of these issues, termed concave-convex fractional problems, which can be addressed effectively. For more general cases, the need for transformations or approximations depends on the specific nature of the problem under consideration. For instance, Huang et al. \cite{huangEnergyEfficientIntegratedSensing2023} obtained a more solvable form of the fractional problem by introducing a new auxiliary variable.
Ma et al. \cite{maUAVLEOIntegratedBackbone2021} employed the Lagrange partial relaxation method to transform integer variables into continuous ones, formulating the dual problem. Subsequently, they restored the relaxed variables back to integers, effectively addressing the problem.
Guo et al. \cite{guoJointPowerUser2022} utilized the generalized benders decomposition method, iteratively solving the primal and master problems to handle the integer variables.
Qian et al. \cite{qianEnablingFullyDecoupledRadio2023} introduced a UE-BS-subchannel matching game using many-to-many matching, proven to converge towards a stable matching.
Di et al. \cite{diSubChannelAssignmentPower2016} treated users and sub-channels as players, formulating the sub-channel assignment problem as a swap many-to-many matching game that converges to a two-sided exchange-stable matching.
However, these methods face limitations in effectively managing and directly resolving the intricate non-convex problem in FD-RAN. Moreover, existing matching algorithms lack detailed descriptions on handling QoS constraints and fall short in terms of low-complexity implementations.

\section{System Model} \label{sec:system_model}
\begin{table*}[!t]
	\centering
	\caption{Key Notations}
	\label{tab:notations}
	\renewcommand\arraystretch{1.2}
	\resizebox{.8\linewidth}{!}{
		\begin{tabular}{ll|ll}
			\toprule
			\hline
			\textbf{Notation}           & \textbf{Description}                            & \textbf{Notation}   & \textbf{Description}                           \\ \hline
			$M$                         & The number of UBSs                              & $S_{m,k}$           & The association between UE $k$ and UBS $m$     \\
			$K$                         & The number of UEs                               & $A_m$               & The operating status of UBS $m$                \\
			$N$                         & The number of antennas equipped with UBS        & $P_k$               & Transmit power of UE $k$                       \\
			$\mathcal{M}_k$             & Set of UBSs serving for UE $k$                  & $\mathbf{h}_{m,k}$  & Channel response between UE $k$ and UBS $m$    \\
			$\mathcal{K}_m$             & Set of UEs served by UBS $m$                    & $\mathbf{R}_{m, k}$ & Spatial correlation between UE $k$ and UBS $m$ \\
			$L$                         & Maximum number of UBSs allowed to serve each UE & $P_{\mathrm{N}}$    & Holistic network power consumption             \\
			$\mathrm{DS}_k$             & Desired signal component for UE $k$             & $\mathrm{EE}$       & Energy efficiency                              \\
			$\mathrm{IS}_{k, k^\prime}$ & Interference from UE $k^\prime$ to UE $k$       & $B$                 & Allocated bandwidth                            \\
			$\mathrm{NS}_{m,k}$         & Noise power at UBS $m$ for signal from UE $k$   & $R_k$               & Uplink rate of UE $k$                          \\ \hline
			\bottomrule
		\end{tabular}
	}
\end{table*}
\begin{figure}[!t]
	\captionsetup{width=\linewidth}
	\centerline{\includegraphics[width=\linewidth]{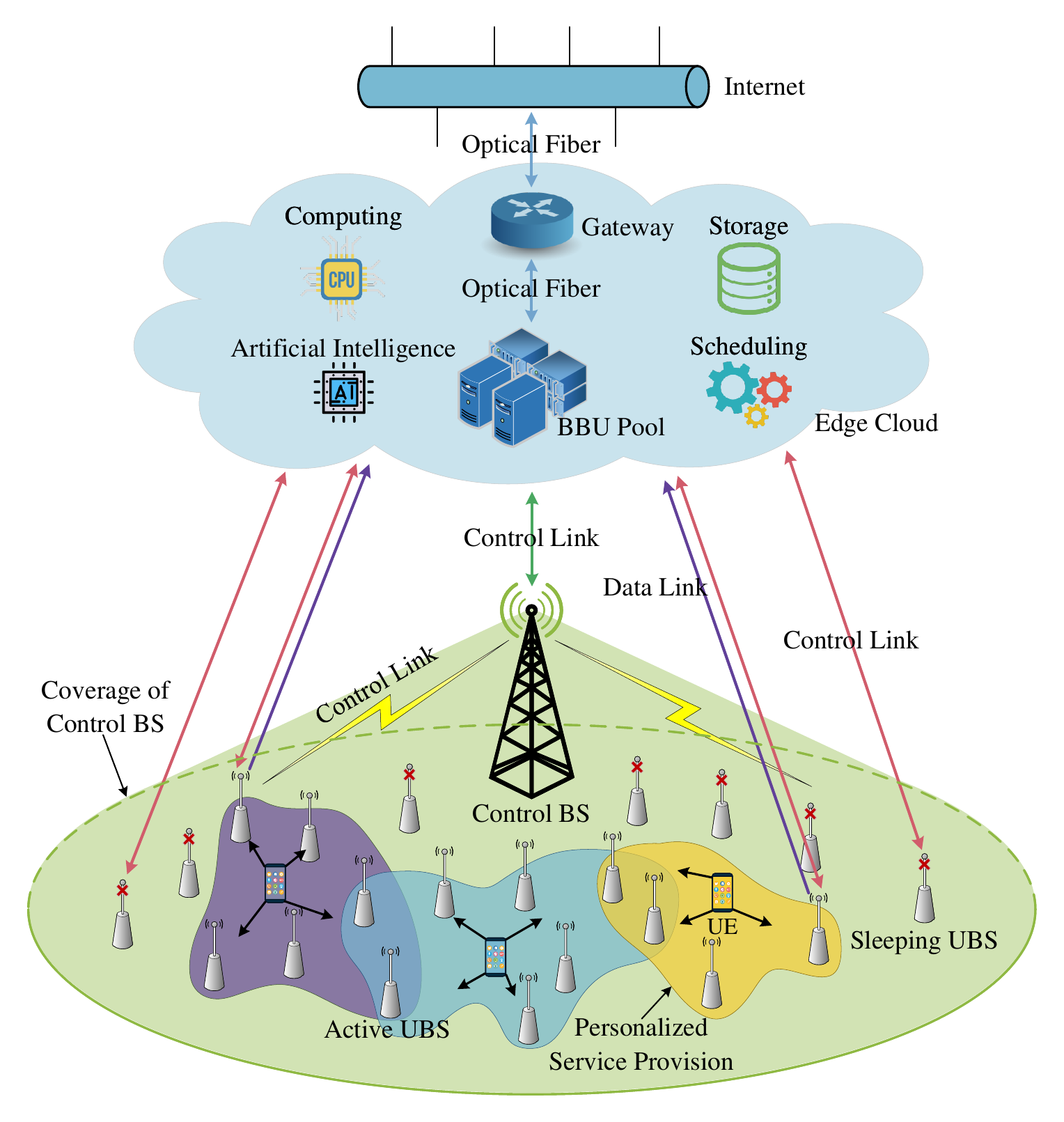}}
	\caption{UE association and UBS sleeping in green uplink FD-RAN.}
	\label{fig:system_model}
\end{figure}

\subsection{Network Model} \label{sec:network_model}
We consider an uplink FD-RAN scenario depicted in \figref{fig:system_model}, where a CBS is responsible for the control of data plane, while $M$ UBSs handle the data plane functions. Additionally, an edge cloud is deployed to provide centralized control and processing.
In the control plane, there are $K$ single-antenna UEs and $M$ UBSs equipped with $N$ antennas underlaid within the coverage of the CBS.
Let $\mathcal{M} = \left\{ 1, \cdots, M \right\}$ and $\mathcal{K} = \left\{ 1, \cdots, K \right\}$ denote the sets of UBSs and UEs, respectively.
In the data plane, multiple cooperative UBSs form a set $\mathcal{M}_k \subset \mathcal{M}$ to provide services for UE $k$. Correspondingly, the UE set $\mathcal{K}_m \subset \mathcal{K}$ represents the set of UEs served by UBS $m$. After receiving data from UEs, UBSs forward the data to the edge cloud via a wired fronthaul. Notice that there exists no active data link between an sleeping UBS and the edge cloud.
The associations between UEs and UBSs are represented by matrix $\mathbf{S} = \left( S_{m,k} \right)_{M\times K}$, where the binary variable $S_{m,k}=1$ indicates that UBS $m$ serves UE $k$, and $S_{m,k} = 0$ otherwise.
Specifically, if $\sum_{k \in \mathcal{K}} S_{m,k} = 0$, it indicates that UBS $m$ is underutilized and should be put to sleep, represented by $A_m = 0$; otherwise, $A_m = 1$, meaning UBS $m$ is active. Thus, the binary vector $\mathbf{A} = \left[ A_1, \cdots, A_M \right] $ represents the operating status of UBSs.
The key notations used in this paper are presented in Table~\ref{tab:notations}.

\subsection{Data Communication Model}
In this subsection, we present the channel model and derive the uplink data rate.
\subsubsection{Channel Modeling and Estimation}
The block fading model \cite{bjornsonMassiveMIMONetworks2017} is adopted, where the channel remains fixed within a finite-sized time-frequency coherence interval and is mutually independent across these intervals. Each coherence interval consists of $\tau_c$ symbols, with $\tau_p$ symbols dedicated to channel estimation, while the remaining $\tau_u = \tau_c - \tau_p$ symbols are allocated for uplink transmission.
We assume that the channel response $\mathbf{h}_{m,k}$ between UE $k$ and UBS $m$ follows the correlated Rayleigh fading model:
\begin{align}
	\mathbf{h}_{m,k} \sim \mathcal{CN}\left( \mathbf{0}_N, \mathbf{R}_{m,k} \right),
\end{align}
where the complex Gaussian distribution $\mathcal{CN}(\cdot, \cdot)$ models the small-scale fading. The matrix $\mathbf{R}_{m,k} \in \mathbb{C}^{N\times N}$ represents the spatial correlation between UE $k$ and UBS $m$, depicting both the large-scale fading and spatial channel correlation.
The BSs forward the received pilot signals to the edge cloud for channel estimation.
Employing the classic minimum mean square error method, we can derive the estimated channel $\hat{\mathbf{h}}_{m,k}$ of the practical channel $\mathbf{h}_{m,k}$, following the approach in \cite{sunJointUserAssociation2023}.

\subsubsection{Derivation of Uplink Rate}
In the uplink data transmission phase, UEs transmit data to UBSs. Each UBS receives a superposition of signals from all UEs, and the received signal at UBS $m$ is denoted as $\mathbf{y}_m \in \mathbb{C}^N$:
\begin{align}
	\mathbf{y}_m = \sum_{k \in \mathcal{K}} \mathbf{h}_{m,k}\sqrt{P_k}\varsigma_k+\mathbf{n}_m,
	\label{eq:received_signal}
\end{align}
where $\varsigma_k$ represents the transmitted signal of UE $k$, with $\mathbb{E}\left\{ \varsigma_k \right\} =0$ and $\mathbb{E}\left\{ \vert \varsigma_k \vert^2 \right\} =1$. $P_k\geq 0$ is the transmit power of UE $k$, and $\mathbf{n}_m \in \mathbb{C}^N \sim \mathcal{CN}\left( \mathbf{0}_N, \sigma^2 \mathbf{I}_N \right) $ denotes the additive Gaussian noise at UBS $m$.

We utilize the fully centralized operation of FD-RAN, where the received signals from all UBSs $\mathbf{y}_1, \cdots, \mathbf{y}_M$ are delivered to the edge cloud for further processing. Specifically, the edge cloud can design the combining vector $\mathbf{v}_{m,k} \in \mathbb{C}^{N},~ \forall m \in \mathcal{M}$ from a global perspective based on the global received signals. Consequently, the estimation of $\varsigma_{k}$ for each UBS, denoted as $\hat{\varsigma}_{m, k}$, can be obtained, and the global estimation $\hat{\varsigma}_{k}$ is derived as \cite{bjornsonMassiveMIMONetworks2017}:
\begin{align}
	\hat{\varsigma}_{k} = \sum_{m \in \mathcal{M}} \hat{\varsigma}_{m,k} = \sum_{m \in \mathcal{M}} S_{m,k} \mathbf{v}^H_{m,k} \mathbf{y}_{m}.
	\label{eq:estimation_of_signal}
\end{align}

Referring to \eqref{eq:received_signal} and \eqref{eq:estimation_of_signal}, utilizing the use-and-then-forget bound method \cite{bjornsonMassiveMIMONetworks2017}, we can derive the following uplink rate of UE $k$:
\begin{align}
	R_k\left( \mathbf{P}, \mathbf{S} \right)  = \frac{\tau_u}{\tau_u + \tau_p} B \log_2 \left( 1 + \mathrm{SINR}_k \right) ~ \mathrm{bit/s},
	\label{eq:uplink_rate}
\end{align}
where vector $\mathbf{P}$ denotes the transmit power of all UEs, $B$ indicates the allocated bandwidth, and $\mathrm{SINR}_k$ is expressed as:
\begin{align}
	\frac{P_k \left\vert \mathbb{E}\left\{  \mathrm{DS}_k \right\}  \right\vert^2}{\sum_{k^\prime \in \mathcal{K}} P_{k^\prime} \mathbb{E}\left\{ \left\vert \mathrm{IS}_{k,k^\prime} \right\vert^2 \right\} - P_k \left\vert \mathbb{E}\left\{ \mathrm{DS}_k \right\}  \right\vert^2 + \sigma^2\mathbb{E}\left\{ \mathrm{NS}_k \right\}},
	\label{SINR}
\end{align}
where $\mathrm{DS}_k, \mathrm{IS}_{k, k^\prime}, \mathrm{NS}_{m,k}$ are defined as the desired signal, the interference signal, and the noise signal, respectively, and can be denoted as:
\begin{align}
	\mathrm{DS}_k             & = \sum_{m \in \mathcal{M}} S_{m,k} \mathbf{v}_{m,k}^H \mathbf{h}_{m,k}, \label{eq:DS}        \\
	\mathrm{IS}_{k, k^\prime} & = \sum_{m \in \mathcal{M}} S_{m,k} \mathbf{v}_{m,k}^H \mathbf{h}_{m,k^\prime}, \label{eq:IS} \\
	\mathrm{NS}_{k}           & = \sum_{m \in \mathcal{M}} \Vert S_{m,k} \mathbf{v}_{m,k} \Vert^2. \label{eq:NS}
\end{align}
In this paper, we adopt the maximum-ratio combining to obtain closed-form expressions,
\begin{align}
	\mathbf{v}_{m,k} = \frac{\hat{\mathbf{h}}_{m,k}}{\sqrt{\mathbb{E} \{ \Vert \hat{\mathbf{h}}_{m,k} \Vert^2 \}}}, ~ \forall m \in \mathcal{M}, k \in \mathcal{K}.
\end{align}

\subsection{Power Consumption Model} \label{sec:uplink_FD-RAN_energy_model}
In this subsection, we develop a holistic and realistic power consumption model for uplink FD-RAN, encompassing four key components:
BSs, fronthauls, edge cloud, and UEs.
In this architecture, BSs, which include CBSs and UBSs, have analogous power consumption models. We simplify the power consumption of CBSs $P^{\text{CBS}}$ to a constant value. This is justified by the periodic and relatively stable nature of control signals, which contrasts with the significant fluctuations observed in data traffic. The power models for the UBSs and the remaining system components are detailed in the subsequent sections.

\subsubsection{Uplink Base Stations}
UBSs handle uplink data functions, and their power consumption comprises the radio frequency (RF) unit, baseband unit (BBU), and architectural costs.
The power consumption of the RF, $P_m^{\mathrm{RF}}$, is modeled as:
\begin{align}
	P_{m}^{\mathrm{RF}} = \sum_{j \in \mathcal{J}_{\mathrm{RF}}} P_{m}^{\mathrm{RF}_{j, \mathrm{ref}}} \prod_{x \in \mathcal{X}_{\mathrm{RF}}} \left( \frac{x_m^{\mathrm{act}}}{x_m^{\mathrm{ref}}} \right)^{s_m^{j, x}},
	\label{eq:RF}
\end{align}
where $P_{m}^{\mathrm{RF}_{j, \mathrm{ref}}}$ represents the reference power of the $j$-th sub-component. $\mathcal{J}_{\mathrm{RF}}$ denotes the set of sub-components of the RF, which includes frequency synthesis, clock generation, analog-to-digital converter, and others. $\mathcal{X}_{\mathrm{RF}} = \left\{ N, B, Q \right\}$ determines that the power consumption of the RF is a function of the number of antennas ($N$), bandwidth ($B$), and quantization ($Q$). Additionally, $x_m^{\mathrm{act}}$ and $x_m^{\mathrm{ref}}$ represent the actual and reference values of $x$, respectively. $s_m^{j, x}$ denotes the scaling exponent factor of $x$ for the $j$-th sub-component.
Similarly, the BBU power model, $P_m^{\mathrm{BBU}}$, depends on the same parameters as the RF model but also includes spectral efficiency ($Se$), load ($Ld$), and streams ($St$). Specifically, the load is defined as being proportional to the transmit rate of UBS $m$:
\begin{align}
	\frac{Ld_m}{Ld_m^{\mathrm{ref}}} \triangleq \frac{R_m^\prime}{R_{m, \mathrm{ref}}^{\prime}}, \label{eq:load_definition}
\end{align}
where $R_m^\prime$ represents the transmit rate of UBS $m$, and $R_{m, \mathrm{ref}}^{\prime}$ is the reference value. The relationship between the transmit rates of the UBSs and the UEs is shown as follows:
\begin{align}
	\sum_{m \in \mathcal{M}} A_m R_m^\prime = \sum_{k \in \mathcal{K}} R_{k}, \sum_{m \in \mathcal{M}} A_m R_{m,\mathrm{ref}}^\prime = \sum_{k \in \mathcal{K}} R_{k, \mathrm{ref}}. \label{eq:relation_R_k_R_m}
\end{align}
In a typical network scenario, most configurations are treated as fixed parameters, with the load being the primary variable. Therefore, in this paper, we assume that the parameters $N$, $B$, $Q$, $Se$, and $St$ of the UBSs are fixed and identically configured.
Additionally, by making reasonable assumptions, we can derive a more concise form for the power consumption of UBSs, as shown in the following lemma.

\begin{lemma} \label{lemma:UBS_power_rate_load}
	In FD-RAN, when all UBSs have identical configurations for the number of antennas, bandwidth, quantization, spectral efficiency, and streams, except for their load, the power consumption of UBSs can be rewritten as:
	\begin{align}
		\sum_{m \in \mathcal{M}} A_m P_m^{\mathrm{BS}} = \sum_{m \in \mathcal{M}} A_m P_m^{\mathrm{BS}_\mathrm{fix}} + P_\mathrm{trf} \sum_{k \in \mathcal{K}} \frac{R_k}{{R_{k, \mathrm{ref}}}}.
	\end{align}
	where the first term represents the fixed part of power consumption in all UBSs, and $P_{\mathrm{trf}}$ denotes a constant power coefficient.
\end{lemma}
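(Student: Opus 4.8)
The plan is to expand the per-UBS power $P_m^{\mathrm{BS}}$ into its three constituents — the RF unit, the BBU, and the architectural overhead — and to show that, once the shared configuration parameters are frozen, every term except one collapses to a quantity that is independent of $m$ (hence part of the ``fixed'' power), while the surviving term is affine in the UBS load.

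First I would look at the RF power in \eqref{eq:RF}. Since $\mathcal{X}_{\mathrm{RF}} = \{N,B,Q\}$ and, by hypothesis, $N$, $B$ and $Q$ are fixed and identically configured at every UBS, each ratio $x_m^{\mathrm{act}}/x_m^{\mathrm{ref}}$ and each exponent $s_m^{j,x}$ is the same across $m$, so $P_m^{\mathrm{RF}}$ reduces to a single constant that I fold into $P_m^{\mathrm{BS}_{\mathrm{fix}}}$. Next I would treat the BBU power, which has the same multiplicative form but with the enlarged variable set that also contains spectral efficiency, load and streams. Freezing all of these except the load splits the sub-component sum into two groups: those whose load-exponent is zero, which are again $m$-independent constants absorbed into $P_m^{\mathrm{BS}_{\mathrm{fix}}}$, and the load-dependent ones. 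Here I would invoke the stated ``reasonable assumption'' that the load enters the BBU power affinely (load-exponent equal to one), so that the variable BBU part of UBS $m$ equals a common coefficient, say $P^{\mathrm{var}}$, times $Ld_m/Ld_m^{\mathrm{ref}}$. The architectural cost is handled identically: its load-independent portion joins $P_m^{\mathrm{BS}_{\mathrm{fix}}}$ and any load-proportional portion is merged into $P^{\mathrm{var}}$.

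It then remains to convert $\sum_{m\in\mathcal{M}} A_m P^{\mathrm{var}}\,Ld_m/Ld_m^{\mathrm{ref}}$ into a function of the UE rates. Using the load definition \eqref{eq:load_definition} to replace $Ld_m/Ld_m^{\mathrm{ref}}$ by $R_m'/R_{m,\mathrm{ref}}'$, and using that the reference rates are common across identically-configured UBSs, this sum becomes proportional to $\sum_{m} A_m R_m'$, which by the flow relation \eqref{eq:relation_R_k_R_m} equals $\sum_{k} R_k$. Normalizing each UE term by its reference rate $R_{k,\mathrm{ref}}$ and pulling the resulting scalar (built from the common UBS and UE reference rates together with $P^{\mathrm{var}}$) outside the sum yields the claimed $P_{\mathrm{trf}}\sum_{k\in\mathcal{K}} R_k/R_{k,\mathrm{ref}}$, and collecting the leftover constants into $\sum_{m} A_m P_m^{\mathrm{BS}_{\mathrm{fix}}}$ finishes the identity.

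I expect the main obstacle to be the bookkeeping of the various reference quantities entering \eqref{eq:load_definition} and \eqref{eq:relation_R_k_R_m}: one must argue carefully that, under ``identical configuration'', they combine into a single scalar $P_{\mathrm{trf}}$ that does not depend on the optimization variables $\mathbf{S}$, $\mathbf{A}$, or $\mathbf{P}$ — in particular that any apparent dependence on the set of active UBSs cancels. The second delicate point is the affine-in-load simplification of the BBU (and architectural) power; this is the modeling assumption that makes a closed form possible, so I would state it explicitly rather than attempt to derive it.
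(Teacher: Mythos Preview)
Your proposal is correct and follows essentially the same route as the paper's own proof: split each $P_m^{\mathrm{BS}}$ into a load-independent part and a load-dependent BBU part, invoke the simplifying assumption that the load exponent equals one so the variable part is affine in $Ld_m/Ld_m^{\mathrm{ref}}$, and then use \eqref{eq:load_definition} together with \eqref{eq:relation_R_k_R_m} to rewrite the active-UBS load sum as $\sum_k R_k/R_{k,\mathrm{ref}}$. The two ``delicate points'' you flag (the reference-rate bookkeeping and the affine-in-load modeling choice) are exactly the places where the paper's argument also leans on the identical-configuration hypothesis and an explicit simplification, so you have identified the right pressure points.
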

\begin{proof}
	See Appendix \ref{app:proof_lemma_UBS_power_rate_load}.
\end{proof}

Consequently, we can derive a simplified expression for $P_m^{\mathrm{UBS}}$ based on \lemmref{lemma:UBS_power_rate_load}, which depends on variable load $R_k$.
The architecture costs primarily stem from the AC-DC main supply, DC-DC power supply, and cooling power losses \cite{auerHowMuchEnergy2011,dessetFlexiblePowerModeling2012}. These losses are estimated by scaling the power consumption of other components in UBSs and are modeled by the loss factors $\sigma_{\mathrm{MS}}$, $\sigma_{\mathrm{DC}}$, and $\sigma_{\mathrm{CO}}$, respectively. Therefore, the power consumption of UBS $m$ is summarized as follows:
\begin{align}
	P_{m}^{\mathrm{UBS}} = N_{m}^s \times \frac{P_m^{\mathrm{RF}}+P_m^{\mathrm{BBU}}}{\left( 1-\sigma_{\mathrm{MS}} \right)\left( 1-\sigma_{\mathrm{DC}} \right) \left( 1-\sigma_{\mathrm{CO}} \right) },
	\label{eq:power_UBS}
\end{align}
where $N_m^s$ is the number of sectors.
It is important to note that even in sleep mode, UBSs still consume power. The sleep power consumption of UBS $m$, denoted as $P_m^{\mathrm{sleep}}$, is modeled as a fraction of its idle power consumption \cite{debaillieFlexibleFutureProofPower2015}:
\begin{align}
	P_m^{\mathrm{sleep}} = \eta_s P_m^{\mathrm{UBS}} \left( Ld = 0 \right),
\end{align}
where $\eta_s$ is the scaling factor, and $Ld = 0$ represents zero load on UBS $m$, indicating it is idle.

\subsubsection{Fronthauls}
Taking inspiration from \cite{vanchienJointPowerAllocation2020}, we propose a power consumption model for the fronthaul between UBSs and the edge cloud, which comprises two components: load-independent and load-dependent. It can be expressed as follows:
\begin{align} \label{eq:power_fronthaul}
	\sum\limits_{m \in {\cal M} } {P_m^{{\rm{FH}}}} = \sum\limits_{m \in  {\cal M} } \left( {{A _m}} P_m^{{\rm{F}}{{\rm{H}}_{{\rm{fix}}}}} + {\Delta _m^{{\rm{F}}{{\rm{H}}_{{\rm{trf}}}}}} \sum\limits_{k \in {\cal K}} {R_k} \right),
\end{align}
where $P_m^{{\rm{FH}}_{{\rm{fix}}}}$ denotes the fixed power part, $\Delta_m^{{\rm{FH}}_{{\rm{trf}}}}$ represents the load-dependent power factor, and $R_k$ denotes the transmission rate of UE $k$.

\subsubsection{Edge Cloud}
We consider that a portion of BBU functions in UBSs is offloaded to the edge cloud for processing \cite{fioraniModelingEnergyPerformance2016}. Consequently, the power consumption of the edge cloud is expressed as follows
\footnote{In this paper, we do not consider BBU sleeping in the edge cloud, so the power consumption of the edge cloud is related to all BSs without sleeping.}:
\begin{align} \label{eq:power_EC}
	P_{\mathrm{EC}} = \kappa \theta \sum_{m \in \mathcal{M}}P_m^{\mathrm{UBS}}.
\end{align}
Accordingly, the power model of UBSs is updated based on \eqref{eq:power_UBS} as follows:
\begin{align} \label{power_all_BSs_new}
	\sum_{m \in \mathcal{M}} P_m^{\mathrm{UBS}} \leftarrow \Big( 1- \kappa \theta \Big) \sum_{m \in \mathcal{M} } P_m^{\mathrm{UBS}},
\end{align}
\begin{align}
	P_m^{\mathrm{sleep}} = \left( 1-\kappa \theta \right) P_m^{\mathrm{sleep}},
\end{align}
where $0 < \kappa \leq 1$ represents the level of centralization in FD-RAN, indicating the proportion of transferred functions from the BBU to the edge cloud. Meanwhile, $\theta$ is the power percentage of BBUs in the UBSs, which can be calculated as:
\begin{align}
	\theta = \left. \frac{\psi_d \sum_{m \in \mathcal{M} } N_{m}^s P_m^{\mathrm{BBU}}}{\left( 1-\sigma_{\mathrm{MS}} \right) \left( 1-\sigma_{\mathrm{DC}} \right) \left( 1-\sigma_{\mathrm{CO}} \right)} \middle/ \sum_{m \in \mathcal{M} } P_m^{\mathrm{UBS}} \right. .
\end{align}
In addition, centralized operations in the edge cloud offer more energy-efficient approaches, benefiting from stacking gain, pooling gain, and cooling gain \cite{fioraniModelingEnergyPerformance2016}.
Stacking gain occurs because centralized BBUs can utilize processing resources more efficiently than distributed BBUs, denoted by $\zeta > 1$. Moreover, centralized operations enable the incorporation of more energy-efficient BBUs within the edge cloud, leading to pooling gain, which is modeled as a $\lambda$-fold increase in BBU capacity but at the cost of $\xi$ times higher power consumption.
Given these two factors, the power model of edge cloud is updated based on \eqref{eq:power_EC} as follows:
\begin{align} \label{eq:power_EC_new}
	P_{\mathrm{EC}} \leftarrow P_{\mathrm{EC}} \times \frac{\xi}{\vert \mathcal{M} \vert} \left\lceil \frac{\vert \mathcal{M} \vert}{\lambda \zeta} \right\rceil.
\end{align}
Additionally, more efficient cooling systems are employed in centralized operations, reflected by the factor $\varrho$. Consequently, the power model of edge cloud is further updated based on \eqref{eq:power_EC_new} as:
\begin{align}
	P_{\mathrm{EC}}\leftarrow \begin{cases}
		                          P_{\mathrm{EC}} \left( \left. \sigma_{\mathrm{CO}}^{\prime} \middle/ \varrho \right. + 1-\sigma_{\mathrm{CO}}^{\prime} \right),                               & \sigma_{\mathrm{CO}}\neq 0. \\
		                          P_{\mathrm{EC}} \left( \left.\sigma_{\mathrm{CO}}^{\prime} \middle/ \left( \left( 1-\sigma_\mathrm{{CO}}^{\prime} \right) \varrho \right) +1 \right. \right), & \sigma_{\mathrm{CO}} = 0.
	                          \end{cases}
\end{align}
where $\sigma_\mathrm{{CO}}^{\prime}$ represents the cooling loss in the edge cloud. It is worth noting that when $\sigma_{\mathrm{CO}} = 0$, indicating the absence of an active cooling system in the BSs, the stacked BBUs in the edge cloud still require cooling. As a result, the cooling gain becomes negative in this scenario.

\subsubsection{User Equipments}
We propose the following model for UE $k$, including circuit power and transmit power \cite{limEnergyefficientBestselectRelaying2012}:
\begin{align}
	P_{k}^{\mathrm{UE}} = P_{k}^{\mathrm{UE}_{\mathrm{cp}}} + \Delta_k^{\mathrm{UE_{pa}}} P_k,
\end{align}
where $P_k$ represents the transmit power, $\Delta_k^{\mathrm{UE_{pa}}}\geq 1$ denotes the power efficiency of the PA in UE, and $P_{k}^{\mathrm{UE}_{\mathrm{cp}}}$ corresponds to the circuit power.

\subsubsection{Holistic Power Consumption}
By aggregating all the aforementioned power consumption components, the holistic power consumption of uplink FD-RAN can be summarized as:
\begin{align}
	 & P_{\mathrm{N}}  \left( \mathbf{P}, \mathbf{A}, \mathbf{S} \right) = \sum_{m \in \mathcal{M}} \left( A_m P_m^{\mathrm{UBS}}\left( \left\{ R_k \right\} \right) + \left( 1-A_m \right) P_m^{\mathrm{sleep}} \right)\notag                                \\
	 & + \sum_{m \in \mathcal{M}} P_m^{\mathrm{FH}} \left( \mathbf{A}, \left\{ R_k \right\}  \right) + P_{\mathrm{EC}} \left(\left\{ R_k \right\}  \right) + \sum_{k \in \mathcal{K}} P_{k}^{\mathrm{UE}} \left( \mathbf{P} \right), \label{eq:uplink_energy}
\end{align}
where the parentheses indicate that power consumption is a function of the corresponding variables. $P_{\mathrm{N}}$ is a function of $\mathbf{P}$, $\mathbf{A}$, and $\mathbf{S}$, with $\left\{ R_k \right\}$ being a function of $\mathbf{P}$ and $\mathbf{S}$, where $\left\{ R_k \right\}$ denotes the set of UEs.

\begin{lemma} \label{lemm:affine_P_N}
	$P_N$ is an affine function of $\left\{ R_k \right\}$ with positive coefficients.
\end{lemma}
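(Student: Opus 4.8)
The plan is to check the claim componentwise against the holistic model \eqref{eq:uplink_energy}, with the association $\mathbf{S}$ and the sleeping vector $\mathbf{A}$ held fixed so that the only quantities allowed to vary are the UE rates $\{R_k\}$. Since a finite sum of affine functions of $\{R_k\}$ with nonnegative rate coefficients (and arbitrary constant offsets) is again such a function, and since multiplying by a strictly positive constant preserves affineness and the sign of every coefficient, it suffices to show that each of the four groups in \eqref{eq:uplink_energy} — UBS power, fronthaul power, edge-cloud power, and UE power — has this structure, and that the rate coefficients are in fact strictly positive.

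For the UBS group $\sum_{m}\left(A_m P_m^{\mathrm{UBS}}(\{R_k\}) + (1-A_m)P_m^{\mathrm{sleep}}\right)$, \lemmref{lemma:UBS_power_rate_load} already rewrites the active part as $\sum_{m}A_m P_m^{\mathrm{BS}_{\mathrm{fix}}} + P_{\mathrm{trf}}\sum_k R_k/R_{k,\mathrm{ref}}$, which is affine in $\{R_k\}$ with the strictly positive coefficients $P_{\mathrm{trf}}/R_{k,\mathrm{ref}}$, while the sleep terms $(1-A_m)P_m^{\mathrm{sleep}}$ are load-independent constants. The edge-cloud correction \eqref{power_all_BSs_new} merely rescales this group by $1-\kappa\theta>0$ (since $0<\kappa\le 1$ and $\theta\in(0,1)$ is a power fraction), leaving the form intact. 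The fronthaul group is read off directly from \eqref{eq:power_fronthaul}: $\sum_m P_m^{\mathrm{FH}} = \sum_m A_m P_m^{\mathrm{FH}_{\mathrm{fix}}} + \big(\sum_m \Delta_m^{\mathrm{FH}_{\mathrm{trf}}}\big)\sum_k R_k$, again affine with a constant offset and the strictly positive slope $\sum_m \Delta_m^{\mathrm{FH}_{\mathrm{trf}}}$.

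For the edge cloud, \eqref{eq:power_EC} and its footnote give $P_{\mathrm{EC}} = \kappa\theta\sum_{m}P_m^{\mathrm{UBS}}$, i.e.\ a positive constant times the UBS power evaluated as if no UBS were sleeping; repeating the load-to-rate substitution of \lemmref{lemma:UBS_power_rate_load} with all UBSs active shows this is affine in $\{R_k\}$ with a constant offset and nonnegative rate coefficients. The subsequent updates \eqref{eq:power_EC_new} and the cooling update each multiply $P_{\mathrm{EC}}$ by a strictly positive constant: $\tfrac{\xi}{|\mathcal{M}|}\lceil|\mathcal{M}|/(\lambda\zeta)\rceil>0$, and the two cooling multipliers are positive because $\sigma_{\mathrm{CO}}^{\prime}\in[0,1)$ and $\varrho>0$. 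Finally, the UE group contributes the constant $\sum_k P_k^{\mathrm{UE}_{\mathrm{cp}}}$ together with a term linear in the transmit powers $\mathbf{P}$ that carries nonnegative coefficients and does not affect the affine dependence on $\{R_k\}$. Summing the four groups yields $P_{\mathrm{N}} = c_0 + \sum_k c_k R_k$ with $c_0$ a constant and every $c_k>0$ (the UBS and fronthaul contributions alone already force $c_k>0$), which is the claim.

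I expect the main obstacle to be the edge-cloud term: one has to pin down, under the ``no BBU sleeping'' convention, exactly how the load (hence the rate dependence) of a sleeping UBS is charged to $\sum_m P_m^{\mathrm{UBS}}$, so that the analogue of \lemmref{lemma:UBS_power_rate_load} with all UBSs active genuinely applies, and one must verify that the integer ceiling in \eqref{eq:power_EC_new} together with the two cooling cases really do reduce to multiplication by strictly positive constants that neither destroy affineness nor flip a sign. A minor secondary point is stating ``positive coefficients'' precisely in the presence of the additive constant offsets and of the UE transmit-power contribution.
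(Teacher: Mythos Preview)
Your proposal is correct and follows essentially the same route as the paper: the paper's own proof is a one-line pointer to equations \eqref{eq:power_fronthaul}, \eqref{eq:power_EC}--\eqref{power_all_BSs_new}, and \eqref{eq:uplink_energy}, and you have simply spelled out the componentwise verification that those equations imply. The subtleties you flag (the edge-cloud ``no BBU sleeping'' convention, the ceiling and cooling multipliers being strictly positive, and the meaning of ``positive coefficients'' in the presence of constant offsets and the $\mathbf{P}$-dependent UE term) are not addressed in the paper's proof at all, so your treatment is strictly more careful than the original.
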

\begin{proof}
	This proof can readily be inferred from equations \eqref{eq:power_fronthaul}, \eqref{eq:power_EC}-\eqref{power_all_BSs_new}, and \eqref{eq:uplink_energy}.
\end{proof}

\section{Problem Formulation} \label{sec:problem_formulation}
\subsection{Energy Efficiency Maximization Problem}
Based on the system models in Section~\ref{sec:system_model}, we formulate the problem of maximizing energy efficiency in uplink FD-RAN subject to QoS constraints as follows:
\begin{subequations} \label{P}
	\begin{align}
		\mathcal{P}\text{:} ~ & \max_{\mathbf{P}, \mathbf{S}, \mathbf{A}} ~ \mathrm{EE}\left( \mathbf{P}, \mathbf{S}, \mathbf{A} \right) = \frac{\sum_{k \in \mathcal{K}} R_k\left( \mathbf{P}, \mathbf{S} \right)}{P_{\mathrm{N}}\left(\mathbf{P}, \mathbf{S}, \mathbf{A}\right)} \label{P:obj} \\
		\rm{s.t.}             & ~ R_k\left(\mathbf{P}, \mathbf{S} \right)  \geq R_{k,\min}, ~ \forall k \in \mathcal{K}, \label{P:st_rate}                                                                                                                                                       \\
		                      & ~ 0 \leq P_k \leq P_{\max}, ~ \forall k \in \mathcal{K}, \label{P:st_p}                                                                                                                                                                                          \\
		                      & ~ \left\vert \mathcal{M}_k \right\vert \leq L, ~ \forall k \in \mathcal{K}, \label{P:st_UE_connect}                                                                                                                                                              \\
		                      & ~ \left\vert \mathcal{K}_m \right\vert \leq N, ~ \forall m \in \mathcal{M}, \label{P:st_BS_connect}                                                                                                                                                              \\
		                      & ~ A_m = \max \left\{ S_{m,k} \right\}, ~ \forall m \in \mathcal{M}, k \in \mathcal{K},\label{P:st_connect_sleep}                                                                                                                                                 \\
		                      & ~ S_{m,k}, A_m \in \left\{ 0,1 \right\}, ~ \forall m \in \mathcal{M}, k \in \mathcal{K}, \label{P:st_binary}
	\end{align}
\end{subequations}
where \eqref{P:st_rate} denotes the minimum rate constraint on UE $k$ to ensure QoS while optimizing energy efficiency. Constraint \eqref{P:st_p} represents the power constraint of UE $k$. Constraints \eqref{P:st_UE_connect} and \eqref{P:st_BS_connect} specify the maximum number of associations for UEs and UBSs, respectively. Specifically, each UE can communicate with at most $L$ UBSs, and each UBS can serve at most $N$ UEs. These constraints are reasonable considering the computational capabilities of UBSs and UEs, fronthaul capacity limitations, and the scalability of FD-RAN.
Constraint \eqref{P:st_binary} indicates that variables $S_{m,k}$ and $A_m$ are binary variables, and \eqref{P:st_connect_sleep} represents the relationship between $S_{m,k}$ and $A_m$. Particularly, if a UBS is in sleep mode, it cannot serve any UEs.

\subsection{Iterative Optimization}
The formulated problem $\mathcal{P}$ involves the binary UEs-UBSs association matrix $\mathbf{S}$, binary UBS operating status vector $\mathbf{A}$, and continuous UE power vector $\mathbf{P}$, resulting in a non-convex nature for both the objective function and constraints. This problem $\mathcal{P}$ is an MINLP, which is known to be NP-hard.

$\mathcal{P}$ can be decomposed into two subproblems.
\begin{subequations}
	\label{P_u}
	\begin{align}
		\mathcal{P}_u \text{:} ~ & \max_{\mathbf{S}, \mathbf{A}} ~ \mathrm{EE}\left( \mathbf{P}^\circ, \mathbf{S}, \mathbf{A} \right) = \frac{\sum_{k \in \mathcal{K}} R_k\left( \mathbf{P}^\circ, \mathbf{S} \right)}{P_{\mathrm{N}} \left(\mathbf{P}^\circ, \mathbf{S}, \mathbf{A}\right)} \label{P_u:obj} \\
		\rm{s.t.}                & ~ R_k\left(\mathbf{P}^\circ, \mathbf{S} \right)  \geq R_{k,\min}, ~ \forall k \in \mathcal{K},                                                                                                                                                                            \\
		                         & ~ \eqref{P:st_UE_connect}-\eqref{P:st_binary}, \notag \label{P_u:sts}
	\end{align}
\end{subequations}
The first subproblem involves joint optimization of UE association and UBS sleeping, where $\mathbf{P}^\circ$ is the UE power vector obtained by solving the second subproblem.
\begin{subequations}
	\label{P_l}
	\begin{align}
		\mathcal{P}_l \text{:} ~ & \max_{\mathbf{P}} ~ \mathrm{EE}\left( \mathbf{P}, \mathbf{S}^\circ, \mathbf{A}^\circ \right) = \frac{\sum_{k \in \mathcal{K}} R_k\left( \mathbf{P}, \mathbf{S}^\circ \right)}{P_{\mathrm{N}} \left(\mathbf{P}, \mathbf{S}^\circ, \mathbf{A}^\circ \right)} \label{P_l:obj} \\
		\rm{s.t.}                & ~ R_k\left(\mathbf{P}, \mathbf{S}^\circ \right)  \geq R_{k,\min}, ~ \forall k \in \mathcal{K},                    \label{P_l:st_rate}                                                                                                                                      \\
		                         & ~ \eqref{P:st_p}, \notag
	\end{align}
\end{subequations}
which addresses the optimization of power control, given an optimized $\mathbf{S}^\circ$ and $\mathbf{A}^\circ$ obtained from the first subproblem.
We will address the first subproblem $\mathcal{P}_l$ in Section~\ref{sec:SLMDB} and the second subproblem $\mathcal{P}_u$ in  Section~\ref{sec:joint_alg}, respectively.

\section{SLMDB Algorithm for Power Control} \label{sec:SLMDB}
In this section, our focus is on resolving the non-convex subproblem $\mathcal{P}_l$ by proposing the SLMDB algorithm.

As presented in \eqref{P_l}, $\mathcal{P}_l$ represents a continuous yet non-convex problem. Notably, the functions $R_k\left(\mathbf{P} \right)$\footnote{For the sake of notational simplicity, we omit $\mathbf{S}^\circ$ and $\mathbf{A}^\circ$ from the expressions in this section.} in constraints \eqref{P_l:st_rate} are non-convex with respect to the variable $\mathbf{P}$.
In fact, constraints \eqref{P_l:st_rate} can be equivalently expressed as the following convex constraints:
\begin{subequations}
	\begin{align}
		r_k \left( \mathbf{P} \right) \leq 0, ~ \forall k \in \mathcal{K},
	\end{align}
	\begin{align}
		r_k \left( \mathbf{P} \right) = ~ & \gamma_k \left( \sum_{k^\prime \in \mathcal{K}} P_{k^\prime} \mathbb{E}\left\{ \left\vert \mathrm{IS}_{k, k^\prime} \right\vert^2 \right\} + \sigma^2 \mathbb{E} \left\{ \mathrm{NS}_k \right\}  \right) \notag \\
		                                  & - \left( 1+\gamma_k \right) P_k \left\vert \mathbb{E} \left\{ \mathrm{DS}_k \right\}  \right\vert^2,
	\end{align}
	\begin{align}
		\gamma_k = 2^{\tau_c R_{k,\min} / \left( \tau_u B \right) } - 1,
	\end{align} \label{eq:st_convex_rate}
\end{subequations}
then the problem $\mathcal{P}_l$ can be reformulated as follows:
\begin{align}
	\mathcal{P}_l\text{:} ~ & \max_{\mathbf{P}} ~ \mathrm{EE}\left( \mathbf{P} \right) = \frac{\sum_{k \in \mathcal{K}} R_k\left( \mathbf{P} \right)}{P_{\mathrm{N}} \left( \mathbf{P}, \left\{ R_k\left( \mathbf{P} \right)  \right\}\right)} \label{P_l_new:obj} \\
	\rm{s.t.}               & ~ \eqref{P:st_p}, \notag \eqref{eq:st_convex_rate},
\end{align}
where $\left\{ R_k \left( \mathbf{P} \right)  \right\} $ inside the objective function \eqref{P_l_new:obj} is to declare that it is an explicit function of $\left\{ R_k \left( \mathbf{P} \right)  \right\} $.
However, it is still non-convex considering the non-convexity of objective function \eqref{P_l_new:obj}, thus solving directly is intractable.

\begin{lemma} \label{lemm:parametric_problem_optimality_of_FP}
	The maximum energy efficiency of the network, denoted as the optimal value $\pi^{\ast}$ of $\mathcal{P}_l$, is achieved if and only if
	\begin{align}
		F\left( \pi^\ast \right) & = \max_{\mathbf{P}} ~ \sum_{k \in \mathcal{K}} R_k\left( \mathbf{P} \right) - \pi^{\ast} P_{\mathrm{N}}^u \left( \mathbf{P}, \left\{ \hat{R}_k\left( \mathbf{P} \right)  \right\}\right) \notag \\
		                         & = \sum_{k \in \mathcal{K}} R_k\left( \mathbf{P}^{\ast} \right) - \pi^{\ast} P_{\mathrm{N}}^u \left( \mathbf{P}^{\ast}, \left\{ \hat{R}_k\left( \mathbf{P}^{\ast} \right)  \right\}\right) = 0,
	\end{align}
	where $\pi^{\ast} = \mathrm{EE} \left( \mathbf{P}^{\ast} \right) = \max_{\mathbf{P}} \mathrm{EE} \left( \mathbf{P}  \right)$.
\end{lemma}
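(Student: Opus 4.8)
The plan is to recognize Lemma~\ref{lemm:parametric_problem_optimality_of_FP} as the classical parametric characterization of an optimal solution to a fractional program, i.e., Dinkelbach's theorem specialized to the ratio $\mathrm{EE}(\mathbf{P}) = \sum_k R_k(\mathbf{P}) / P_{\mathrm{N}}(\mathbf{P},\{R_k(\mathbf{P})\})$ over the feasible set $\mathcal{D}$ defined by \eqref{P:st_p} and \eqref{eq:st_convex_rate}. The key structural facts I would invoke up front are: (i) the feasible set $\mathcal{D}$ is nonempty and compact --- compactness follows from the box constraint \eqref{P:st_p} together with the continuity of $R_k$ and hence closedness of the convex constraints \eqref{eq:st_convex_rate}; (ii) the numerator $\sum_k R_k(\mathbf{P})$ is continuous and nonnegative on $\mathcal{D}$; and (iii) the denominator $P_{\mathrm{N}}$ is continuous and strictly positive on $\mathcal{D}$ --- positivity is immediate from the power model, since the always-on terms (e.g.\ $P^{\text{CBS}}$, the fixed UBS/fronthaul costs of active UBSs, the edge-cloud term) are strictly positive, and by Lemma~\ref{lemm:affine_P_N} $P_{\mathrm{N}}$ is affine in $\{R_k\}$ with positive coefficients so it is bounded below by a positive constant. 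Thus $\mathrm{EE}$ is a continuous function on a nonempty compact set, so the maximum $\pi^\ast = \max_{\mathbf{P}\in\mathcal{D}}\mathrm{EE}(\mathbf{P})$ is attained at some $\mathbf{P}^\ast$, and $\pi^\ast > 0$.

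I would then prove the two directions of the ``if and only if.'' For the \emph{only if} direction, suppose $\mathbf{P}^\ast$ attains $\pi^\ast$. For every feasible $\mathbf{P}$ we have $\mathrm{EE}(\mathbf{P}) \le \pi^\ast$, i.e.\ $\sum_k R_k(\mathbf{P}) \le \pi^\ast P_{\mathrm{N}}(\mathbf{P})$, hence $\sum_k R_k(\mathbf{P}) - \pi^\ast P_{\mathrm{N}}(\mathbf{P}) \le 0$ for all feasible $\mathbf{P}$, so $F(\pi^\ast) = \max_{\mathbf{P}}\bigl(\sum_k R_k(\mathbf{P}) - \pi^\ast P_{\mathrm{N}}(\mathbf{P})\bigr) \le 0$. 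Evaluating the bracket at $\mathbf{P} = \mathbf{P}^\ast$ gives $\sum_k R_k(\mathbf{P}^\ast) - \pi^\ast P_{\mathrm{N}}(\mathbf{P}^\ast) = P_{\mathrm{N}}(\mathbf{P}^\ast)\bigl(\mathrm{EE}(\mathbf{P}^\ast) - \pi^\ast\bigr) = 0$ (using $P_{\mathrm{N}}(\mathbf{P}^\ast)>0$), so the maximum is $\ge 0$; combined with $F(\pi^\ast)\le 0$ this forces $F(\pi^\ast) = 0$ with the max attained at $\mathbf{P}^\ast$. For the \emph{if} direction, suppose some feasible $\mathbf{P}^\ast$ satisfies $F(\pi^\ast) = \sum_k R_k(\mathbf{P}^\ast) - \pi^\ast P_{\mathrm{N}}(\mathbf{P}^\ast) = 0$ for the particular scalar $\pi^\ast = \mathrm{EE}(\mathbf{P}^\ast)$. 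Then for every feasible $\mathbf{P}$, $\sum_k R_k(\mathbf{P}) - \pi^\ast P_{\mathrm{N}}(\mathbf{P}) \le F(\pi^\ast) = 0$, and dividing by $P_{\mathrm{N}}(\mathbf{P}) > 0$ yields $\mathrm{EE}(\mathbf{P}) \le \pi^\ast$ for all feasible $\mathbf{P}$, i.e.\ $\pi^\ast$ is the optimal value and $\mathbf{P}^\ast$ is a maximizer.

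The only genuinely delicate point --- and the step I would treat most carefully --- is reconciling notation: the lemma writes $P_{\mathrm{N}}^u(\mathbf{P},\{\hat R_k(\mathbf{P})\})$ in the parametric subproblem versus $P_{\mathrm{N}}(\mathbf{P},\{R_k(\mathbf{P})\})$ in the objective \eqref{P_l_new:obj}. I would note that on the feasible region of $\mathcal{P}_l$ these coincide (the hat notation and the superscript $u$ refer to the same holistic power with the rates substituted as functions of $\mathbf{P}$, specialized by the fixed $\mathbf{S}^\circ,\mathbf{A}^\circ$), so the argument above applies verbatim; if the intent is that $\hat R_k$ are auxiliary variables constrained by $\hat R_k \le R_k(\mathbf{P})$, one additionally uses that $P_{\mathrm{N}}$ is increasing in each $R_k$ (Lemma~\ref{lemm:affine_P_N}, positive coefficients), so the subproblem's optimum still takes $\hat R_k = R_k(\mathbf{P})$ and the reduction is unchanged. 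Beyond that bookkeeping, the proof is the textbook Dinkelbach argument and requires no optimization machinery beyond Weierstrass's extreme value theorem and the sign of $P_{\mathrm{N}}$.
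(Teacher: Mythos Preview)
Your proposal is correct and follows essentially the same route as the paper: both invoke the classical Dinkelbach parametric characterization, using the strict positivity of $P_{\mathrm{N}}$ to pass between the ratio and the subtractive form. The paper's version is terser---it rewrites $\mathcal{P}_l$ in epigraph form and then cites Schaible for the root-finding equivalence---whereas you prove both implications directly and add the compactness/continuity bookkeeping; your treatment of the $P_{\mathrm{N}}^u$ and $\hat{R}_k$ notation is also more careful than the paper's own appendix, which silently uses $P_{\mathrm{N}}(\mathbf{P},\{R_k(\mathbf{P})\})$ throughout.
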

\begin{proof}
	See Appendix \ref{app:proof_lemm_parametric_problem_optimality_of_FP}.
\end{proof}

In reality, problem $\mathcal{P}_l$ falls under the category of a nonlinear fractional programming problem, as indicated in \lemmref{lemm:parametric_problem_optimality_of_FP}, which is equivalent to a parametric problem. Nevertheless, the quest to solve the equivalent problem remains arduous due to its non-convex nature.
A more manageable version of fractional programming is the concave-convex fractional programming (CCFP) \cite{shenFractionalProgrammingCommunication2018}, characterized by a concave numerator and a convex denominator in the context of minimization problems. To this end, we iteratively approximate $\mathcal{P}_l$ by solving a conservative lower-bound CCFP problem at each step, thereby ensure worst-case energy efficiency.

\subsection{Successive Lower-Bound Maximization Algorithm}
The numerator and denominator in \eqref{P_l_new:obj} exhibit non-convexity due to the non-convex nature of $R_k\left( \mathbf{P} \right)$. It is worth noting that $R_k\left( \mathbf{P} \right)$ can be represented as the difference of two concave functions, as follows:
\begin{small}
	\begin{subequations}
		\renewcommand{\theequation}{\normalsize\theparentequation\alph{equation}}
		\begin{align} \label{eq:rate}
			R_k\left( \mathbf{P} \right) & = \frac{\tau_u}{\tau_u+\tau_p} B \left( f_k(\mathbf{P})- g_k(\mathbf{P}) \right),
		\end{align}
		\begin{align}
			f_k (\mathbf{P}) & = \log_2\left( \sum_{k^\prime \in \mathcal{K}} P_{k^\prime} \mathbb{E}\left\{ \left\vert \mathrm{IS}_{k, k^\prime}\right\vert^2 \right\} + \sigma^2\mathbb{E}\left\{ \mathrm{NS}_k \right\} \right),                      \\
			g_k (\mathbf{P}) & = \log_2\Bigg( \sum_{k^\prime \in \mathcal{K}} P_{k^\prime} \mathbb{E}\left\{ \left\vert \mathrm{IS}_{k,k^\prime} \right\vert^2 \right\} - P_k \left\vert \mathbb{E}\left\{  \mathrm{DS}_k \right\}  \right\vert^2 \notag \\
			                 & \hspace{4em} + \sigma^2\mathbb{E}\left\{ \mathrm{NS}_k \right\} \Bigg).
		\end{align}
	\end{subequations}
\end{small}

To transform problem $\mathcal{P}_l$ into a more tractable CCFP form and ensure worst-case energy efficiency, we first derive a convex upper bound of $R_k\left( \mathbf{P} \right)$:
\begin{align} \label{eq:hat_rate}
	R_k\left( \mathbf{P} \right) & \leq \hat{R}_k\left( \mathbf{P}, \mathbf{P}^{(n)} \right) \notag                                          \\
	                             & = \frac{\tau_u}{\tau_u+\tau_p} B \left( \hat{f}_k(\mathbf{P}, \mathbf{P}^{(n)})-g_k (\mathbf{P}) \right),
\end{align}
where $\hat{f}_k(\mathbf{P}, \mathbf{P}^{(n)}) \geq f_k(\mathbf{P})$ defined in \eqref{eq:f_SCA}, is the first-order Taylor expansion of $f_k(\mathbf{P})$ at the point $\mathbf{P}^{(n)}$, and $\mathbf{P}^{(n)}$ is the fixed point of the Taylor expansion in the $n$-th iteration.

Similarly, we derive a concave lower bound of $R_k\left( \mathbf{P} \right)$ as:
\begin{align} \label{eq:overline_rate}
	R_k\left( \mathbf{P} \right) & \geq \overline{R}_k\left( \mathbf{P}, \mathbf{P}^{(n)} \right)                                              \\
	                             & = \frac{\tau_u}{\tau_u+\tau_p} B \left( f_k (\mathbf{P}) - \hat{g}_k(\mathbf{P}, \mathbf{P}^{(n)}) \right),
\end{align}
where $\hat{g}_k(\mathbf{P}, \mathbf{P}^{(n)}) \leq g_k(\mathbf{P})$ defined in \eqref{eq:g_SCA}, is the first-order Taylor expansion of $g_k(\mathbf{P})$ at $\mathbf{P}^{(n)}$.

\begin{figure*}[!t]
	\begin{align}
		\hat{f}_k\big(\mathbf{P}, \mathbf{P}^{(n)}\big) & = \sum_{k^{\prime} \in \mathcal{K}} \frac{\mathbb{E}\left\{ \left\vert \mathrm{IS}_{k, k^\prime} \right\vert^2 \right\}}{\ln 2 \left( \sum_{k^\prime \in \mathcal{K}} P_{k^\prime}^{(n)} \mathbb{E}\left\{ \left\vert \mathrm{IS}_{k,k^\prime} \right\vert^2 \right\} + \sigma^2\mathbb{E}\left\{ \mathrm{NS}_k \right\} \right)} \times \left( P_{k^\prime} - P_{k^\prime}^{(n)} \right) + f_k  \left( \mathbf{P}^{(n)} \right), \label{eq:f_SCA}
	\end{align}
	\begin{align}
		\hat{g}_k\big(\mathbf{P}, \mathbf{P}^{(n)}\big) = \frac{\sum_{k^{\prime} \in \mathcal{K}} \mathbb{E}\left\{ \left\vert \mathrm{IS}_{k,k^\prime} \right\vert^2 \right\}\left( P_{k^\prime} - P_{k^\prime}^{(n)} \right) - \left\vert \mathbb{E}\left\{ \mathrm{DS}_k \right\}  \right\vert^2 \left( P_{k} - P_{k}^{(n)} \right)}{\ln 2 \left( \sum_{k^\prime \in \mathcal{K}} P_{k^\prime}^{(n)} \mathbb{E}\left\{ \left\vert \mathrm{IS}_{k,k^\prime} \right\vert^2 \right\} - P_k^{(n)}\left\vert \mathbb{E}\left\{ \mathrm{DS}_k \right\} \right\vert^2 + \sigma^2\mathbb{E}\left\{ \mathrm{NS}_k \right\} \right)} + g_k  \left( \mathbf{P}^{(n)} \right). \label{eq:g_SCA}
	\end{align}
	\hrulefill
\end{figure*}

By replacing the instances of $R_k\left( \mathbf{P} \right)$ in both the numerator and denominator of \eqref{P_l_new:obj} with the upper bound $\overline{R}_k\left( \mathbf{P}, \mathbf{P}^{(n)} \right)$ and lower bound $\hat{R}_k\left( \mathbf{P}, \mathbf{P}^{(n)} \right)$, respectively, we obtain a lower bound for the original problem $\mathcal{P}_l$ in the form of the following CCFP problem:
\begin{align}
	\mathcal{P}_l^{\prime}\text{:} ~ & \max_{\mathbf{P}} ~ \overline{\mathrm{EE}}\left( \mathbf{P}, \mathbf{P}^{(n)} \right) = \frac{\sum_{k \in \mathcal{K}} \overline{R}_k\left( \mathbf{P}, \mathbf{P}^{(n)} \right)}{P_{\mathrm{N}} \left( \mathbf{P}, \mathbf{P}^{(n)}, \left\{ \hat{R}_k\left( \mathbf{P}, \mathbf{P}^{(n)} \right)  \right\}\right)} \label{SP_1:obj} \\
	\rm{s.t.}                        & ~ \eqref{P:st_p}, \notag \eqref{eq:st_convex_rate}.
\end{align}
where $\overline{\mathrm{EE}}\left( \mathbf{P}, \mathbf{P}^{(n)} \right) \leq \mathrm{EE}\left( \mathbf{P} \right)$.

Instead of solving the intractable non-convex problem $\mathcal{P}_l$, we solve a sequence of approximated problems $\mathcal{P}_l^{\prime}$. By iteratively solving the problem $\mathcal{P}_l^{\prime}$, we can gradually improve the conservative approximation based on the optimal solution in the previous iteration. Specifically, we update $\mathbf{P}^{(n)}$ by solving the following problem:
\begin{align}
	\mathbf{P}^{(n)} = \arg \max_{\mathbf{P}} \overline{\mathrm{EE}} \left( \mathbf{P}, \mathbf{P}^{(n-1)} \right),\label{eq:power_update}
\end{align}
where $\mathbf{P}^{(n-1)} = \mathbf{P}^{(n-1), \ast}$ is the optimal power solution obtained in the $(n-1)$-th iteration.
As shown in \lemmref{lemm:EE_lower_bound} and \theoref{theo:SLM_stationary_convergent}, the proposed algorithm is both effective and globally convergent.
\begin{lemma} \label{lemm:EE_lower_bound}
	The function $\overline{\mathrm{EE}}\left( \mathbf{P}, \mathbf{P}_0 \right)$ serves as a global lower-bound for $\mathrm{EE}\left( \mathbf{P} \right)$, and equality holds if and only if $\mathbf{P} = \mathbf{P}_0$.
\end{lemma}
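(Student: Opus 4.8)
The plan is to obtain $\overline{\mathrm{EE}}(\mathbf P,\mathbf P_0)\le\mathrm{EE}(\mathbf P)$ by squeezing the numerator downward and the denominator upward using the one-sided bounds already established, and then to extract the equality case by tracking precisely when each of those one-sided bounds is tight.

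First I would record two monotonicity facts. From \eqref{eq:overline_rate}, $\overline R_k(\mathbf P,\mathbf P_0)\le R_k(\mathbf P)$ for every $k$, so $\sum_k\overline R_k(\mathbf P,\mathbf P_0)\le\sum_k R_k(\mathbf P)$, and this common numerator is nonnegative because uplink rates are nonnegative. From \eqref{eq:hat_rate}, $\hat R_k(\mathbf P,\mathbf P_0)\ge R_k(\mathbf P)$; by \lemmref{lemm:affine_P_N} the network power is affine and strictly increasing in each $R_k$, hence
\begin{align}
P_{\mathrm N}\!\left(\mathbf P,\{\hat R_k(\mathbf P,\mathbf P_0)\}\right)\ \ge\ P_{\mathrm N}\!\left(\mathbf P,\{R_k(\mathbf P)\}\right)\ >\ 0 .
\end{align}
Combining these, and using that a ratio with a nonnegative numerator is nondecreasing in the numerator and nonincreasing in a positive denominator,
\begin{align}
\overline{\mathrm{EE}}(\mathbf P,\mathbf P_0)=\frac{\sum_k\overline R_k(\mathbf P,\mathbf P_0)}{P_{\mathrm N}(\mathbf P,\{\hat R_k(\mathbf P,\mathbf P_0)\})}\ \le\ \frac{\sum_k R_k(\mathbf P)}{P_{\mathrm N}(\mathbf P,\{R_k(\mathbf P)\})}=\mathrm{EE}(\mathbf P),
\end{align}
which is the global lower-bound claim.

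For the equality characterization, the ``if'' part is a direct substitution: putting $\mathbf P=\mathbf P_0$ in \eqref{eq:f_SCA} and \eqref{eq:g_SCA} gives $\hat f_k(\mathbf P_0,\mathbf P_0)=f_k(\mathbf P_0)$ and $\hat g_k(\mathbf P_0,\mathbf P_0)=g_k(\mathbf P_0)$, hence $\overline R_k(\mathbf P_0,\mathbf P_0)=\hat R_k(\mathbf P_0,\mathbf P_0)=R_k(\mathbf P_0)$ and both displayed inequalities collapse to equalities. For the ``only if'' part I would trace the chain backwards: $\overline{\mathrm{EE}}(\mathbf P,\mathbf P_0)=\mathrm{EE}(\mathbf P)$ forces equality in both steps, i.e.\ $\overline R_k(\mathbf P,\mathbf P_0)=R_k(\mathbf P)$ for all $k$ (equivalently $\hat g_k(\mathbf P,\mathbf P_0)=g_k(\mathbf P)$, by \eqref{eq:rate} and \eqref{eq:overline_rate}), and --- because the coefficients in \lemmref{lemm:affine_P_N} are strictly positive --- $\hat R_k(\mathbf P,\mathbf P_0)=R_k(\mathbf P)$ for all $k$ (equivalently $\hat f_k(\mathbf P,\mathbf P_0)=f_k(\mathbf P)$). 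Thus for each $k$ the first-order Taylor planes of the concave functions $f_k$ and $g_k$ at $\mathbf P_0$ touch the graphs again at $\mathbf P$.

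The main obstacle is converting these tangency conditions into $\mathbf P=\mathbf P_0$. Restricting to the segment $\mathbf P(t)=\mathbf P_0+t(\mathbf P-\mathbf P_0)$, each of $f_k$ and $g_k$ is $\log_2$ of an affine function of $\mathbf P$; a concave function whose supporting plane at $t=0$ meets it again at $t=1$ must be affine on $[0,1]$, and since $t\mapsto\log_2 t$ is strictly concave this forces the corresponding affine argument to be constant along the segment. Writing $\delta=\mathbf P-\mathbf P_0$, constancy of the argument of $f_k$ reads $\sum_{k'}\delta_{k'}\,\mathbb E\{|\mathrm{IS}_{k,k'}|^2\}=0$, while constancy of the argument of $g_k$ reads $\sum_{k'}\delta_{k'}\,\mathbb E\{|\mathrm{IS}_{k,k'}|^2\}-\delta_k\,|\mathbb E\{\mathrm{DS}_k\}|^2=0$; subtracting the two gives $\delta_k\,|\mathbb E\{\mathrm{DS}_k\}|^2=0$, and since each served UE has $|\mathbb E\{\mathrm{DS}_k\}|^2>0$ (see \eqref{eq:DS} with a valid combiner), every $\delta_k$ vanishes, i.e.\ $\mathbf P=\mathbf P_0$. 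I expect the delicate point to be phrasing the ``concave function meeting its supporting plane at a second point is affine in between, hence the log-argument is constant'' step cleanly; the remaining manipulations are routine.
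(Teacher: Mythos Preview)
Your proof is correct and follows essentially the same approach as the paper: bound the numerator down via $\overline{R}_k\le R_k$ and the denominator up via $\hat R_k\ge R_k$ together with the affine-with-positive-coefficients structure of $P_{\mathrm N}$ from \lemmref{lemm:affine_P_N}, then read off the equality case from tightness of the Taylor bounds. Your treatment of the ``only if'' direction is in fact more careful than the paper's, which simply asserts that $\hat f_k=f_k$ and $\hat g_k=g_k$ hold iff $\mathbf P=\mathbf P_0$ without spelling out the strict-concavity/constant-argument argument you supply.
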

\begin{proof}
	See Appendix \ref{app:proof_lemm_EE_lower_bound}.
\end{proof}
\begin{theorem} \label{theo:SLM_stationary_convergent}
	Every limit point of the iterates generated by the Successive Lower-Bound Maximization (SLM) algorithm is a stationary point of $\mathcal{P}_l^{\prime}$, and the SLM algorithm is globally convergent.
\end{theorem}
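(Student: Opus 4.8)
The plan is to recognize that the SLM iteration \eqref{eq:power_update} is an instance of the minorize--maximize (successive convex approximation) scheme applied to $\mathcal{P}_l$, and then to run the standard convergence argument for such schemes. Three properties of the surrogate $\overline{\mathrm{EE}}(\mathbf{P},\mathbf{P}_0)$ drive the proof: (i) it is a global lower bound of $\mathrm{EE}(\mathbf{P})$ over the common feasible set $\mathcal{F}=\{\mathbf{P}:\eqref{P:st_p},\eqref{eq:st_convex_rate}\}$, which is exactly \lemmref{lemm:EE_lower_bound}; (ii) it is tight at the expansion point, $\overline{\mathrm{EE}}(\mathbf{P}_0,\mathbf{P}_0)=\mathrm{EE}(\mathbf{P}_0)$, again from \lemmref{lemm:EE_lower_bound}; and (iii) it is first-order consistent, i.e. $\nabla_{\mathbf{P}}\overline{\mathrm{EE}}(\mathbf{P},\mathbf{P}_0)|_{\mathbf{P}=\mathbf{P}_0}=\nabla_{\mathbf{P}}\mathrm{EE}(\mathbf{P})|_{\mathbf{P}=\mathbf{P}_0}$. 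Property (iii) I would establish by noting that $\overline{R}_k$ and $\hat{R}_k$ are built from the first-order Taylor expansions $\hat f_k,\hat g_k$, which match $f_k,g_k$ in both value and gradient at $\mathbf{P}_0$; hence the numerator and denominator of $\overline{\mathrm{EE}}$ match those of $\mathrm{EE}$ in value and gradient at $\mathbf{P}_0$, and since the denominator $P_{\mathrm{N}}$ is strictly positive (\lemmref{lemm:affine_P_N}), the quotient rule transfers the consistency to the ratio. I would also record that $\mathcal{F}$ is convex, closed and bounded (the box \eqref{P:st_p}), hence compact, and identical across all subproblems $\mathcal{P}_l^{\prime}$.

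Next I would prove monotonicity of the outer objective. Since $\mathbf{P}^{(n+1)}$ globally maximizes the concave--convex fractional program $\overline{\mathrm{EE}}(\cdot,\mathbf{P}^{(n)})$ over $\mathcal{F}$ --- a global maximizer being returned by Dinkelbach's algorithm for CCFP --- we obtain the chain $\mathrm{EE}(\mathbf{P}^{(n+1)})\ge\overline{\mathrm{EE}}(\mathbf{P}^{(n+1)},\mathbf{P}^{(n)})\ge\overline{\mathrm{EE}}(\mathbf{P}^{(n)},\mathbf{P}^{(n)})=\mathrm{EE}(\mathbf{P}^{(n)})$, using (i) for the first inequality, optimality for the second, and (ii) for the equality. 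Thus $\{\mathrm{EE}(\mathbf{P}^{(n)})\}$ is nondecreasing; being continuous on the compact set $\mathcal{F}$, $\mathrm{EE}$ is bounded above, so the sequence converges to some $\pi^{\infty}$.

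The limit-point analysis then proceeds as follows. By compactness of $\mathcal{F}$ there is a subsequence $\mathbf{P}^{(n_j)}\to\mathbf{P}^{\star}\in\mathcal{F}$. For any fixed $\mathbf{P}\in\mathcal{F}$, optimality of $\mathbf{P}^{(n_j+1)}$ gives $\overline{\mathrm{EE}}(\mathbf{P},\mathbf{P}^{(n_j)})\le\overline{\mathrm{EE}}(\mathbf{P}^{(n_j+1)},\mathbf{P}^{(n_j)})\le\mathrm{EE}(\mathbf{P}^{(n_j+1)})$. Letting $j\to\infty$, the right side converges to $\pi^{\infty}=\mathrm{EE}(\mathbf{P}^{\star})$ (it is a subsequence of the convergent outer sequence), while the left side converges to $\overline{\mathrm{EE}}(\mathbf{P},\mathbf{P}^{\star})$ by joint continuity of $\overline{\mathrm{EE}}$. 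Hence $\overline{\mathrm{EE}}(\mathbf{P},\mathbf{P}^{\star})\le\mathrm{EE}(\mathbf{P}^{\star})$ for every $\mathbf{P}\in\mathcal{F}$, and since $\overline{\mathrm{EE}}(\mathbf{P}^{\star},\mathbf{P}^{\star})=\mathrm{EE}(\mathbf{P}^{\star})$, the point $\mathbf{P}^{\star}$ globally maximizes $\overline{\mathrm{EE}}(\cdot,\mathbf{P}^{\star})$ over $\mathcal{F}$. Therefore $\mathbf{P}^{\star}$ satisfies the first-order (KKT) optimality conditions of that CCFP subproblem; by property (iii) these conditions coincide with the stationarity conditions of $\mathcal{P}_l^{\prime}$ (equivalently of $\mathcal{P}_l$). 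Since the initialization was arbitrary and the outer objective always converges, the SLM algorithm is globally convergent.

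The main obstacle is establishing (iii), the first-order consistency of the \emph{fractional} surrogate: one must verify carefully that replacing both numerator and denominator by value- and gradient-matching bounds leaves the gradient of the ratio unchanged at $\mathbf{P}_0$, which is precisely where positivity of $P_{\mathrm{N}}$ (\lemmref{lemm:affine_P_N}) and the quotient rule enter. A secondary technicality is ensuring a constraint qualification --- e.g. Slater's condition for the convex set $\mathcal{F}$ --- so that ``first-order optimality of $\overline{\mathrm{EE}}(\cdot,\mathbf{P}^{\star})$'' can be written as the KKT stationarity system; and one must invoke the known global optimality of Dinkelbach's method on each CCFP to justify the ``global maximizer'' step used in the monotonicity chain.
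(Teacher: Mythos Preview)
Your proposal is correct and follows essentially the same approach as the paper: both establish the key MM/SCA surrogate properties (tightness, global lower bound, first-order consistency, joint continuity, compactness) and deduce that every limit point is stationary. The only difference is that the paper simply verifies these properties and then invokes \cite[Theorem~1 and Corollary~1]{razaviyaynUnifiedConvergenceAnalysis2013}, whereas you reproduce the monotonicity and limit-point argument of that reference explicitly; your self-contained version is a faithful unpacking of what the paper cites.
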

\begin{proof}
	See Appendix \ref{app:proof_theo_SLM_stationary_convergent}.
\end{proof}

\subsection{Dinkelbach's Algorithm} \label{sec:Dinkelbach}
\begin{lemma} \label{lemma:CCFP}
	Problem $\mathcal{P}_l^{\prime}$ satisfies the standard CCFP formulation.
\end{lemma}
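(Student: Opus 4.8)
The plan is to verify directly that problem $\mathcal{P}_l^{\prime}$ meets the three defining requirements of a standard concave-convex fractional program: a concave numerator, a convex and strictly positive denominator, and a convex feasible set. Since the objective in \eqref{SP_1:obj} is posed as a maximization of a ratio, I will check that the numerator $\sum_{k \in \mathcal{K}} \overline{R}_k(\mathbf{P}, \mathbf{P}^{(n)})$ is concave in $\mathbf{P}$, that the denominator $P_{\mathrm{N}}(\mathbf{P}, \mathbf{P}^{(n)}, \{\hat{R}_k(\mathbf{P}, \mathbf{P}^{(n)})\})$ is convex and positive on the feasible region, and that the constraint set defined by \eqref{P:st_p} and \eqref{eq:st_convex_rate} is convex.

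First I would handle the numerator. By \eqref{eq:overline_rate}, each $\overline{R}_k(\mathbf{P}, \mathbf{P}^{(n)})$ equals a positive constant times $f_k(\mathbf{P}) - \hat{g}_k(\mathbf{P}, \mathbf{P}^{(n)})$. The term $f_k(\mathbf{P})$ is the logarithm of an affine (hence concave) function of $\mathbf{P}$, so it is concave; $\hat{g}_k(\mathbf{P}, \mathbf{P}^{(n)})$ is by construction the first-order Taylor expansion of $g_k$ at $\mathbf{P}^{(n)}$ and is therefore affine in $\mathbf{P}$, so subtracting it preserves concavity. Summing over $k$ keeps the numerator concave. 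Next, for the denominator: by \lemmref{lemm:affine_P_N}, $P_{\mathrm{N}}$ is an affine function of $\{R_k\}$ with positive coefficients, and here each $R_k$ has been replaced by $\hat{R}_k(\mathbf{P}, \mathbf{P}^{(n)})$, which by \eqref{eq:hat_rate} is a positive constant times $\hat{f}_k(\mathbf{P}, \mathbf{P}^{(n)}) - g_k(\mathbf{P})$. Here $\hat{f}_k$ is affine (a Taylor expansion) and $g_k$ is concave (log of an affine function), so $\hat{R}_k$ is a positive combination of an affine term and the negative of a concave term, i.e.\ convex; composing with the positive-coefficient affine map $P_{\mathrm{N}}$ of $\{R_k\}$, plus the remaining power-control terms in \eqref{eq:uplink_energy} that are affine in $\mathbf{P}$, yields a convex denominator. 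Positivity of the denominator follows because every constituent power-consumption term in \eqref{eq:uplink_energy} is nonnegative and the always-on components (CBS power, fixed UBS/fronthaul terms, sleep power) are strictly positive. Finally, the feasible set: \eqref{P:st_p} is a box constraint, and \eqref{eq:st_convex_rate} was already shown in the surrounding text to be equivalent to $r_k(\mathbf{P}) \le 0$ with $r_k$ convex (it is affine in $\mathbf{P}$, being a linear combination of the $P_{k'}$); hence the feasible region is a convex polytope.

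The main obstacle I anticipate is bookkeeping rather than conceptual: one must be careful that the replacement of $R_k$ by $\hat{R}_k$ in the \emph{denominator} and by $\overline{R}_k$ in the \emph{numerator} is done consistently so that the resulting ratio is genuinely of concave-over-convex type, and that no hidden dependence on $\mathbf{P}$ (for instance through the UBS load terms routed via \eqref{eq:relation_R_k_R_m}) breaks affinity or sign of the coefficients. Once the affine/concave structure of $f_k$, $\hat{f}_k$, $g_k$, $\hat{g}_k$ is pinned down and \lemmref{lemm:affine_P_N} is invoked for the denominator, the verification of the CCFP conditions is routine, and the lemma follows.
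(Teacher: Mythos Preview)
Your proposal is correct and follows essentially the same approach as the paper's own proof: verify concavity of the numerator via the concavity of each $\overline{R}_k$, convexity of the denominator by invoking \lemmref{lemm:affine_P_N} together with the convexity of each $\hat{R}_k$, positivity of $P_{\mathrm{N}}$, and convexity of the feasible set from \eqref{P:st_p} and \eqref{eq:st_convex_rate}. The only difference is granularity---you spell out why $f_k,\hat f_k,g_k,\hat g_k$ have the needed curvature, whereas the paper simply asserts the concavity/convexity of $\overline{R}_k$ and $\hat{R}_k$ and cites the CCFP definition.
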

\begin{proof}
	Since $\overline{R}_k\left( \mathbf{P}, \mathbf{P}^{(n)} \right)$ is concave, the sum of concave functions in the numerator is also concave. Furthermore, based on \lemmref{lemm:affine_P_N} and the convexity of $\hat{R}_k\left( \mathbf{P}, \mathbf{P}^{(n)} \right)$, the denominator is convex. Additionally, it is evident that the denominator $P_N \geq 0$, and the constraints define a feasible convex domain. Therefore, according to the definition in \cite{shenFractionalProgrammingCommunication2018}, $\mathcal{P}_l^{\prime}$ is a CCFP problem.
\end{proof}

According to \lemmref{lemma:CCFP}, $\mathcal{P}_l^\prime$ is a standard CCFP problem. Its objective function is pseudoconcave, implying that any stationary point is a global maximum point. Consequently, it can be solved using various algorithms \cite{ishedenFrameworkLinkLevelEnergy2012}. To efficiently address the CCFP problem, we employ Dinkelbach's algorithm \cite{dinkelbachNonlinearFractionalProgramming1967} to obtain the globally optimal solution of $\mathcal{P}_l^{\prime}$. Specifically, we solve the following equivalent problem:
\begin{align}
	\mathcal{P}_l^{\prime\prime}: ~ & \max_{\mathbf{P}} ~ U\left( \mathbf{P}, \mathbf{P}^{(n)} \right) = \sum_{k \in \mathcal{K}} \overline{R}_k\left( \mathbf{P}, \mathbf{P}^{(n)} \right) \notag           \\
	                                & \hspace{2.3em} - \pi P_{\mathrm{N}} \left( \mathbf{P}, \mathbf{P}^{(n)} \left\{ \hat{R}_k\left( \mathbf{P}, \mathbf{P}^{(n)} \right)  \right\}\right)  \label{P_2:obj} \\
	\rm{s.t.}                       & ~ \eqref{P:st_p}, \notag\eqref{eq:st_convex_rate},
\end{align}
where the problem is a parametric subtractive problem that is strictly convex in $\mathbf{P}$. The parameter $\pi \geq 0$ is the fractional parameter updated at each iteration of the Dinkelbach's algorithm. In the $\tilde{n}$-th iteration, $\pi$ is updated as follows:
\begin{align} \label{eq:pi}
	\pi = \frac{\sum_{k \in \mathcal{K}} \overline{R}_k\left( \mathbf{P}^{(\tilde{n})}, \mathbf{P}^{(n)} \right)}{P_{\mathrm{N}} \left( \mathbf{P}^{(\tilde{n})}, \mathbf{P}^{(n)} \left\{ \hat{R}_k\left( \mathbf{P}^{(\tilde{n})}, \mathbf{P}^{(n)} \right)  \right\}\right)}.
\end{align}
By iteratively solving the equivalent problem $\mathcal{P}_l^{\prime\prime}$ of the CCFP problem $\mathcal{P}_l^{\prime}$, we can obtain the globally optimal solution for $\mathcal{P}_l^{\prime}$, as demonstrated in \theoref{theo:Dinkelbach}.
The solution for the original problem $\mathcal{P}_l$ is now complete, and the comprehensive algorithm, are outlined in \algref{alg:SLMDB}.
\begin{theorem} \label{theo:Dinkelbach}
	Dinkelbach's Algorithm converges to the globally optimal solution of $\mathcal{P}_l^{\prime}$.
\end{theorem}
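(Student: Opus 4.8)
The plan is to invoke the classical convergence theory of Dinkelbach's algorithm, which applies once we have verified that $\mathcal{P}_l^{\prime}$ is a well-posed concave–convex fractional program with a strictly positive, bounded denominator over a nonempty compact feasible set. First I would establish these structural prerequisites: by \lemmref{lemma:CCFP} the numerator $\sum_{k} \overline{R}_k(\mathbf{P},\mathbf{P}^{(n)})$ is concave, the denominator $P_{\mathrm{N}}(\mathbf{P},\mathbf{P}^{(n)},\{\hat{R}_k\})$ is convex and satisfies $P_{\mathrm{N}} \ge 0$, and the constraint set defined by \eqref{P:st_p} and \eqref{eq:st_convex_rate} is convex; since \eqref{P:st_p} bounds each $P_k$ to $[0,P_{\max}]$, the feasible set is compact, and (assuming feasibility, i.e. the QoS constraints admit at least one point) the problem attains its optimum. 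The denominator is bounded away from zero because it contains the strictly positive fixed power terms $A_m P_m^{\mathrm{BS}_{\mathrm{fix}}}$, $P_m^{\mathrm{FH}_{\mathrm{fix}}}$, the always-on CBS power, and the UE circuit powers $P_k^{\mathrm{UE}_{\mathrm{cp}}}$ identified in the power model of Section~\ref{sec:uplink_FD-RAN_energy_model}.

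Next I would set up the Dinkelbach iteration explicitly as in the excerpt: given $\pi \ge 0$, solve the parametric subtractive problem $\mathcal{P}_l^{\prime\prime}$ in \eqref{P_2:obj} — which is strictly concave in $\mathbf{P}$ over the same convex feasible set, hence has a unique maximizer $\mathbf{P}^{(\tilde n)}$ — and then update $\pi$ via \eqref{eq:pi}, i.e. $\pi \leftarrow \overline{\mathrm{EE}}(\mathbf{P}^{(\tilde n)},\mathbf{P}^{(n)})$. Define $F(\pi) = \max_{\mathbf{P}} U(\mathbf{P},\mathbf{P}^{(n)})$. The two facts driving convergence are: (i) $F$ is continuous, strictly decreasing, and convex in $\pi$, with $F(\pi^\star)=0$ at the optimal value $\pi^\star$ of $\mathcal{P}_l^{\prime}$ and $F(\pi)>0 \iff \pi<\pi^\star$ — this is exactly the equivalence between a CCFP and its parametric form, analogous to \lemmref{lemm:parametric_problem_optimality_of_FP}; and (ii) the sequence of parameters $\{\pi_t\}$ produced by the iteration is monotonically increasing and bounded above by $\pi^\star$, so it converges, and its limit must satisfy $F(\pi_\infty)=0$, forcing $\pi_\infty = \pi^\star$. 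I would prove monotonicity by the standard argument: $F(\pi_t) > 0$ implies $\sum_k \overline{R}_k(\mathbf{P}^{(\tilde n)}_{t}) - \pi_t P_{\mathrm{N}}(\mathbf{P}^{(\tilde n)}_{t}) > 0$, hence $\pi_{t+1} = \sum_k \overline{R}_k(\mathbf{P}^{(\tilde n)}_{t})/P_{\mathrm{N}}(\mathbf{P}^{(\tilde n)}_{t}) > \pi_t$; and $\pi_{t+1} \le \pi^\star$ because $\pi_{t+1}$ is the objective value of $\mathcal{P}_l^{\prime}$ at a feasible point.

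Finally I would note that pseudoconcavity of the fractional objective (already observed after \lemmref{lemma:CCFP}) guarantees that the point returned is a global maximizer of $\mathcal{P}_l^{\prime}$, not merely a stationary point. The main obstacle is not the convergence argument itself — that is textbook once (i) and (ii) are in place — but rather cleanly justifying the quantitative rate/termination behavior and, more importantly, the boundedness-away-from-zero of $P_{\mathrm{N}}$ together with feasibility of $\mathcal{P}_l^{\prime}$: one must confirm that the convex surrogate constraints \eqref{eq:st_convex_rate} remain satisfiable at each outer SCA iterate $\mathbf{P}^{(n)}$, which is where the interaction between the outer SLM loop and the inner Dinkelbach loop needs care. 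For the purposes of this theorem I would simply assume feasibility of $\mathcal{P}_l^{\prime}$ (inherited from the assumed feasibility of $\mathcal{P}_l$ and the fact that $\mathbf{P}^{(n)}$ itself is feasible, since the bounds $\overline{R}_k \le R_k \le \hat{R}_k$ are tight at $\mathbf{P}=\mathbf{P}^{(n)}$) and cite \cite{dinkelbachNonlinearFractionalProgramming1967,shenFractionalProgrammingCommunication2018} for the remaining details of the convergence proof.
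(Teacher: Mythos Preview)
Your proposal is correct and follows essentially the same approach as the paper: invoke the CCFP structure (\lemmref{lemma:CCFP}), the parametric equivalence of \lemmref{lemm:parametric_problem_optimality_of_FP} reducing $\mathcal{P}_l^{\prime}$ to finding the root of $F(\pi)$, and then appeal to the classical Dinkelbach convergence result \cite{dinkelbachNonlinearFractionalProgramming1967}. The paper's own proof is only a three-line citation sketch, whereas you have additionally written out the standard monotonicity argument and the denominator-positivity and feasibility checks; these extra details are fine but go beyond what the paper itself supplies.
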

\begin{proof}
	\lemmref{lemm:parametric_problem_optimality_of_FP} reveals that the global optimal solution of the CCFP problem $\mathcal{P}_l^{\prime}$ can be obtained by finding the root of the nonlinear function $F(\pi)$. Since Dinkelbach's Algorithm utilizes a root-finding method, the optimality of $\mathcal{P}_l^{\prime\prime}$ is guaranteed. Furthermore, the convergence of Dinkelbach's Algorithm to the optimal solution has been proven in \cite{dinkelbachNonlinearFractionalProgramming1967}. Hence, the global optimality of Dinkelbach's Algorithm is established.
\end{proof}

\begin{algorithm}[!t]
	\caption{SLMDB algorithm for $\mathcal{P}_l$}
	\label{alg:SLMDB}
	\KwIn{Problem $\mathcal{P}_l$, prescribed threshold $\epsilon=10^{-3}$, feasible initial power $\mathbf{P}^{(0)}$, initial energy efficiency $\mathrm{EE}^{(n)} = \inf$, initial difference of objective function $\vartheta^{(0)} = \epsilon + 1$.}
	\KwOut{Optimal power $\mathbf{P}^{\ast}$ for the problem $\mathcal{P}_l$.}
	Set $n = 0$\;
	\While{$\vartheta^{(n)} > \epsilon$}{
		Set $n = n + 1$\;
		Update $\mathbf{P}^{(n)}$ according to \eqref{eq:power_update} using Dinkelbach's Algorithm\;
		Calculate the energy efficiency $\mathrm{EE}^{(n)}$ with power $\mathbf{P}^{(n)}$\;
		Compute the difference $\vartheta^{(n)} = \left.\left( \mathrm{EE}^{(n)} - \mathrm{EE}^{(n-1)} \right)  \middle/ \mathrm{EE}^{(n-1)}\right.$\;
	}
	Assign $\mathbf{P}^{\ast} = \mathbf{P}^{(n)}$.
\end{algorithm}

\section{TriMSM Algorithm for UE Association and UBS Sleeping} \label{sec:joint_alg}

In this section, we tackle the nonlinear integer programming subproblem $\mathcal{P}_u$ using the TriMSM algorithm alongside three low-complexity realizations.

\subsection{Modified Many-to-Many Swap Matching}
The joint UE association and UBS sleeping can be simplified into a sole UE association problem, where the UBSs' operational status is defined by the UEs' associations based on the constraints outlined in \eqref{P:st_connect_sleep}.
Considering the intricate associations between UEs and UBSs, this can be formulated as a matching game in which UEs and UBSs belong to two separate sets, denoted as $\mathcal{K}$ and $\mathcal{M}$, respectively. These players act rationally to make decisions that maximize their individual interests. In FD-RAN, players have the capability to exchange information among themselves via the CBS, granting them complete information about the game. The formal definition of the many-to-many matching is presented as follows:
\begin{definition}[Many-to-Many Matching] \label{def:many_to_many_matching}
	For two disjoint sets $\mathcal{M}$ and $\mathcal{K}$, a many-to-many matching, denoted as $\mathcal{S} \subseteq \mathcal{M}\times \mathcal{K}$, is a mapping from the set $\mathcal{M} \cup \mathcal{K}$ into the set of all subsets of $\mathcal{M} \cup \mathcal{K}$, such that for each $K_i \in \mathcal{K}$ and $M_j \in \mathcal{M}$, the following conditions hold:
	\begin{enumerate}
		\item $\mathcal{S}(K_i) \subseteq \mathcal{M}$, and in particular, $\mathcal{S}(K_i) = \emptyset$ if $K_i$ is not matched to any $M_j$;
		\item $\mathcal{S}(M_j) \subseteq \mathcal{K}$, and in particular, $\mathcal{S}(M_j) = \emptyset$ if $M_j$ is not matched to any $K_i$;
		\item $\left\vert \mathcal{S}(K_i) \right\vert \leq T$;
		\item $\left\vert \mathcal{S}(M_j) \right\vert \leq N$;
		\item $K_i \in \mathcal{S}(M_j)$ if and only if $M_j \in \mathcal{S}(K_i)$.
	\end{enumerate}
\end{definition}
In this definition, Condition (1) specifies UBSs matched with UE $K_i$ as a subset of $\mathcal{M}$, while Condition (2) indicates UEs matched with UBS $M_j$ as a subset of $\mathcal{K}$. Conditions (3) and (4) set the maximum matching pairs for players $K_i$ and $M_j$, aligning with constraints \eqref{P:st_UE_connect} and \eqref{P:st_BS_connect}. Condition (5) denotes the inherent reciprocity in matching pairs.

The matching game formulated in this paper is a many-to-many matching with externalities \cite{bodine-baronPeerEffectsStability2011,yuFullydecoupledRadioAccess2023}, where peer effects arise due to interference caused by co-channel transmission.
Since the matching results significantly depend on competition among players, we define the following preference lists of players as criteria for decision-making in the matching game:
\begin{definition}[Modified Preference Lists]
	For UE $K_i$, there exist two distinct UBSs $M_j$ and $M_{j^\prime}$\footnote{In this context, $M_j^{\prime}$ can be an empty set ($\emptyset$). Consequently, players matched with UE $K_i$ can be added or removed, allowing for more flexible matchings. It's important to emphasize that this addition operation does not violate the \defref{def:many_to_many_matching}, as it cannot result in the formation of a matching $\mathcal{S}^{\prime}$.}, each forming separate matchings denoted as $\mathcal{S}$ and $\mathcal{S}^\prime$, where $M_j \in \mathcal{S}(K_i)$ and $M_{j^\prime} \in \mathcal{S}^\prime(K_i)$. We denote the preference notation $\succ_{K_i}$ for UE $K_i$, and define its preference for UBSs as follows:
	\begin{align}
		\left( M_j, \mathcal{S} \right) \succ_{K_i} \left( M_{j^\prime}, \mathcal{S}^\prime \right) \Leftrightarrow \begin{cases}
			                                                                                                            \mathrm{EE}\left( \mathcal{S} \right) > \mathrm{EE}\left( \mathcal{S}^\prime \right), \\
			                                                                                                            R_k\left( \mathcal{S} \right) \geq R_{k, \min},~\forall k \in \mathcal{K},            \\
		                                                                                                            \end{cases}
	\end{align}
	which implies that UE $K_i$ prefers $M_j$ over $M_{j^\prime}$ only if $\mathcal{S}$ would yield higher energy efficiency than $\mathcal{S}^\prime$, and all UEs can attain the minimum QoS rate when $\mathcal{S}$. It is crucial to emphasize that, unless $\mathcal{S}^{\prime}$ exhibits superior energy efficiency and meets the QoS requirements, $\mathcal{S}^{\prime}$ is not the preferred choice.
	Similarly, for UBS $M_j$, with two different UEs and their corresponding formed matchings, $K_i \in \mathcal{S}(M_j)$ and $K_{i^\prime} \in \mathcal{S}^\prime(M_j)$, its preference for UEs is defined as:
	\begin{align}
		\left( K_i, \mathcal{S} \right) \succ_{M_j} \left( K_{i^\prime}, \mathcal{S}^\prime \right) \Leftrightarrow \begin{cases}
			                                                                                                            \mathrm{EE}\left( \mathcal{S} \right) > \mathrm{EE}\left( \mathcal{S}^\prime \right), \\
			                                                                                                            R_k\left( \mathcal{S} \right) \geq R_{k, \min},~\forall k \in \mathcal{K},            \\
		                                                                                                            \end{cases}
	\end{align}
	which indicates that UBS $M_j$ prefers $K_i$ if and only if $\mathcal{S}$ leads to higher energy efficiency with guaranteed QoS for all UEs.
\end{definition}
Different from the preference lists in traditional matching \cite{diSubChannelAssignmentPower2016}, the modified preference lists presented in this paper also accommodate the QoS constraints, as defined in \eqref{P:st_rate}, instead of merely comparing the objective function.

However, compared to classic two-sided matching, addressing many-to-many matching with externalities poses significant challenges and intricacies, rendering traditional approaches inapplicable directly \cite{diSubChannelAssignmentPower2016}.
In light of that, we pivot towards swap matching as a means to attain two-sided exchange stability and optimize the energy efficiency of the FD-RAN. The specific definition is articulated below:

\begin{definition}[Swap Matching]
	Given a matching $\mathcal{S}$ with $K_i \in \mathcal{S}(M_m), K_j \in \mathcal{S}(M_n), K_i \notin \mathcal{S}(M_n)$ and $K_j \notin \mathcal{S}(M_m)$, the swap matching, denoted as $\mathcal{S}_{jn}^{im}$, is defined as $\mathcal{S}_{jn}^{im} = \mathcal{S}\backslash \left\{ (M_m, K_i), (M_n, K_j) \right\} \cup \left\{ (M_n, K_i), (M_m, K_j) \right\}$, where $K_i \in \mathcal{S}_{jn}^{im}(M_n), K_j \in \mathcal{S}_{jn}^{im}(M_m), K_i \notin \mathcal{S}_{jn}^{im}(M_m),$ and $K_j \notin \mathcal{S}_{jn}^{im}(M_n)$.
\end{definition}

Note that not all swap operations are approved, considering the players' preferences. To elucidate the conditions for approval, we introduce the definition of a swap-blocking pair:
\begin{definition}[Swap-Blocking Pair] \label{def:swap-blocking_pair}
	$(K_i, K_j)$ is a pair in a given matching $\mathcal{S}$. Suppose there exists $M_m \in \mathcal{S}(K_i)$ and $M_n \in \mathcal{S}(K_j)$ such that:
	\begin{itemize}
		\item $\forall x \in \left\{ K_i, K_j, M_m, M_n \right\}, \left( \mathcal{S}_{jn}^{im}(x), \mathcal{S}_{jn}^{im} \right) \succcurlyeq_x \left( \mathcal{S}(x), \mathcal{S} \right)$,
		\item $\exists x \in \left\{ K_i, K_j, M_m, M_n \right\}, \left( \mathcal{S}_{jn}^{im}(x), \mathcal{S}_{jn}^{im} \right) \succ_x \left( \mathcal{S}(x), \mathcal{S} \right)$,
	\end{itemize}
	then the swap matching $\mathcal{S}_{jn}^{im}(x)$ is approved, and the pair $(K_i, K_j)$ is considered a swap-blocking pair in $\mathcal{S}$.
\end{definition}

Following multiple approved swap operations, the matching among the players can reach a two-sided exchange stable status, defined as follows:
\begin{definition}[Two-Sided Exchange Stable] \label{def:two-sided_exchange_stable}
	The matching $\mathcal{S}$ is considered two-sided exchange stable if none of the pairs $(K_i, K_j),~\forall i, j$ in $\mathcal{S}$ form a swap-blocking pair.
\end{definition}

\subsection{Overall TriMSM Algorithm}
With the definitions provided above, we introduce the overall TriMSM algorithm, which consists of two phases:

\subsubsection{Initialization Phase}
In this phase, we establish the initial matching between UEs and UBSs using the received power-based selection (RECP) method \cite{ngoTotalEnergyEfficiency2018} as the criterion for selecting UBSs for UEs. For each UE, the UBSs are ranked in ascending order according to the RECP criterion. Then, we select the top $\delta$\% UBSs to be matched with this UE. If the number of selected UBSs exceeds $L$, we only consider the top $L$ UBSs, taking into account constraint \eqref{P:st_UE_connect}. During this initial process for each UE, if one UBS is already matched with a number of UEs equal to $N$, indicating it's fully loaded, it is no longer available for matching with additional UEs, in compliance with constraint \eqref{P:st_BS_connect}.

\subsubsection{Swap Matching Phase}
In this phase, we first identify all possible UE pairs. For each pair, two UEs are selected to exchange their matched UBSs. The edge cloud then checks whether this swap operation would lead to a swap-blocking pair. If the swap operation is approved, the swap-blocking pair is removed after the swap is completed. This process continues until there are no more swap-blocking pairs in the matching, indicating that the matching is two-sided exchange stable.

\begin{algorithm}[!t]
	\caption{TriMSM algorithm for solving $\mathcal{P}_u$}
	\label{alg:TriMSM}
	\KwIn{Problem $\mathcal{P}_u$, RECP parameter $\delta$.}
	\KwOut{Optimal solution $\mathbf{S}^\ast, \mathbf{A}^\ast, \mathbf{P}^\ast$, and value $\mathrm{EE}^\ast$.}
	\textbf{Initialization Phase:} \\
	\For{each UE $K_i \in \mathcal{K}$}{
		Select the top $\delta$\% UBSs based on the RECP criterion subject to constraints \eqref{P:st_UE_connect} and \eqref{P:st_BS_connect}\;
	}
	Initialize the optimal $\mathbf{S}^\ast$ and $\mathbf{A}^\ast$ with the initial matching, and then initialize $\mathrm{EE}^\ast$ based on $\mathbf{S}^\ast$ and $\mathbf{A}^\ast$ using \algref{alg:SLMDB}\;
	\textbf{Swap Matching Phase:} \\
	Identify all possible pairs $\left( K_i, K_j \right)$ where $K_i, K_j \in \mathcal{K} \cup \emptyset $\;
	\While{there exists swap-blocking pair}{
		\For{each pair $\left( K_i, K_j \right)$}{
			\eIf{$\left( K_i, K_j \right) $ forms a swap-blocking pair along with $M_m \in \mathcal{S}(K_i), M_n \in \mathcal{S}(K_j)$}{
				Update the matching as $\mathcal{S} = \mathcal{S}^{im}_{jn}$, and then update the optimal $\mathbf{S}^\ast$ and $\mathbf{A}^\ast$ based on the new matching\;
				Update the optimal $\mathrm{EE}^\ast$ using $\mathbf{S}^\ast$ and $\mathbf{A}^\ast$ through the application of \algref{alg:SLMDB}\;
			}{
				Move to the next pair\;
			}
		}
	}
\end{algorithm}

The comprehensive description of the TriMSM algorithm is provided in \algref{alg:TriMSM}. The effectiveness and stability of the proposed TriMSM algorithm can be readily verified; detailed proofs can be referenced in \cite{yuFullydecoupledRadioAccess2023,diSubChannelAssignmentPower2016}.

\subsection{Three Low Complexity Alternative Power Control} \label{sec:low_complexity_power_control}
The complexity of \algref{alg:TriMSM}, as demonstrated by the theoretical and simulation results in Table~\ref{tab:complexity} and Section~\ref{sec:simulation_results}, limits its applicability to large-scale FD-RAN deployments.
To address this issue, we replace the computationally intensive optimal power control algorithm in the main loop of \algref{alg:TriMSM} with low-complexity heuristic methods. Once the user association converges, we apply the optimal power control algorithm in the final iteration to further enhance performance.
The variants are referred to as TriMSM+FiPC, TriMSM+QoPC, and TriMSM+EIPC, respectively.
This hybrid approach strikes a favorable balance between performance and computational efficiency, achieving results close to the optimal solution while significantly reducing overall complexity.

\subsubsection{Fixed Power Control (FiPC)}
The simplest approach is to employ a fixed power setting for all UEs. In this method, we set $P_k = P_{\max}$ for all UE $k \in \mathcal{K}$.

\subsubsection{QoS-constrained Power Control (QoPC)}
To ensure that the QoS constraints defined in \eqref{eq:st_convex_rate} are met for UEs to the greatest extent possible, we frame the following feasibility problem $\mathcal{P}_f$ to determine power control:
\begin{align}
	\mathcal{P}_f\text{:} ~ & \text{find}~\mathbf{P}                              \\
	\rm{s.t.}               & ~ \eqref{P:st_p}, \notag \eqref{eq:st_convex_rate},
\end{align}
which is convex and can be efficiently solved.

\subsubsection{Effective Channel Inversion Power Control (EIPC)}
In this approach, each UE's power is set proportionally to the inverse of the channel gain, aiming to achieve uniform received power at the UBSs. Taking into account the sleeping UBSs, we define the effective channel gain between UBS $m$ and UE $k$ as $\beta_{m,k} = \Vert S_{m,k} \mathbf{h}_{m, k} \Vert^2,~\forall m \in \mathcal{M}, k \in \mathcal{K}$. We represent the effective channel gains for UE $k$ as the vector $\pmb{\beta}_k = \left[ \beta_{1,k}, \beta_{2,k}, \cdots, \beta_{M,k} \right]^T,~\forall k \in \mathcal{K}$. The power of UE $k$ is then determined as follows:
\begin{align}
	P_k = \frac{\min_{k \in \mathcal{K}} \left\{ \Vert \pmb{\beta}_k \Vert^2 \right\} }{\Vert \pmb{\beta}_k \Vert^2} P_{\max},~\forall k \in \mathcal{K},
\end{align}
where the denominator represents the summation of channel gains from all active UBSs for UE $k$, considering that transmitted signals from UE $k$ would affect all active UBSs, while the numerator ensures that the power of UE $k$ does not exceed $P_{\max}$.

\subsection{Complexity Analysis} \label{sec:complexity_analysis}
\subsubsection{SLMDB Algorithm}
The SLMDB algorithm iteratively approximates the original problem $\mathcal{P}_l$ to obtain the optimal power $\mathbf{P}^{(n)}$ in the $n$-th iteration by solving the fractional problem $\mathcal{P}_l^\prime$. We denote the iteration number of this approximation procedure as $I_1$. The fractional problem $\mathcal{P}_l^{\prime}$ is addressed using Dinkelbach's algorithm, wherein the parametric convex problem $\mathcal{P}_l^{\prime\prime}$ is solved iteratively. The complexity of Dinkelbach's algorithm consists of two components: the iteration complexity and the per-iteration computation cost. We denote the iteration number of Dinkelbach's algorithm as $I_2$. In each per-iteration step, the convex problem $\mathcal{P}_l^{\prime\prime}$ is solved using the primal-dual interior-point method, with a computation complexity of $\mathcal{O}\left( K^3 \log \left( \epsilon^{-1} \right)  \right)$, where $\epsilon$ represents the accepted duality gap. Therefore, the total computation complexity of the SLMDB algorithm can be derived as $\mathcal{O}\left( I_1 I_2 K^3 \log \left( \epsilon^{-1} \right)  \right)$.

\subsubsection{Low Complexity Power Control Algorithms}
As shown in Section \ref{sec:low_complexity_power_control}, FiPC and EIPC have explicit expressions, resulting in a computational complexity of $\mathcal{O}\left( 1 \right)$. In the case of QoPC, the overall complexity is attributed to solving the convex problem $\mathcal{P}_f$, which can be efficiently addressed using the primal-dual interior-point method with a complexity of $\mathcal{O}\left( K^3 \log \left( \epsilon^{-1} \right)  \right)$, $\epsilon$, where $\epsilon$ represents the accepted duality gap.

\subsubsection{TriMSM Algorithm}
The computational complexity of the many-to-many swap matching algorithm can be attributed to the initialization phase and the swap matching phase. In the initialization phase, complexity primarily arises from obtaining power using the EIPC method, which is $\mathcal{O}\left( 1 \right) $. Additionally, sorting $M$ RSRP values for $K$ UEs to select the top UBSs introduces complexity, with an average complexity of $\mathcal{O}(M^2 K)$.
During each iteration of the swap matching phase, considering the possibility of empty sets, each UE has $L+1$ potential matching players. With $K$ UEs, there are $\binom{K+1}{2}$ possible pairs denoted as $\left( K_i, K_j \right)$. Hence, there can be at most $\binom{K+1}{2} \left( L+1 \right)^2$ potential swap operations. Let $I_3$ represent the total number of iterations, then the total number of swap operations can be expressed as $I_3 \binom{K+1}{2} \left( L+1 \right)^2$.
In each swap operation, determining the power is essential. For the original TriMSM algorithm, the SLMDB algorithm is employed to calculate the power control, with a complexity of $\mathcal{O}\left( I_1 I_2 K^3 \log \left( \epsilon^{-1} \right)  \right)$. Therefore, the total computational complexity of the original TriMSM algorithm can be denoted as $\mathcal{O} \left( M^2 K + I_1 I_2 I_3 \left( L+1 \right)^2 K^4 \left( K+1 \right) \log \left( \epsilon^{-1} \right)/2 \right) $.
For the TriMSM algorithm combined with the three other low-complexity power control methods, with complexities of $\mathcal{O}\left( 1 \right) $ or $\mathcal{O}\left( K^3 \log \left( \epsilon^{-1} \right)  \right) $, we can evaluate their complexities in a similar manner.

Regarding the exhaustive search method, it necessitates the exploration of all possible combinations ($2^{MK}$) of associations between UEs and UBSs. Assuming it employs the SLMDB algorithm for power control, we can readily deduce its overall complexity based on the earlier analysis.

We summarize the complexity of all algorithms in Table \ref{tab:complexity}. Upon comparison, it becomes evident that our proposed TriMSM algorithm exhibits significantly lower complexity than the exhaustive search method. Furthermore, the inclusion of three low-complexity alternatives further diminishes the overall complexity, rendering the algorithm well-suited for large-scale FD-RAN deployments.

\begin{table}[htbp]
	\centering
	\caption{Complexity of Various Algorithms}
	\label{tab:complexity}
	\renewcommand\arraystretch{1.2}     \resizebox{\linewidth}{!}{         \begin{tabular}{c|c}
			\toprule
			\hline
			\textbf{Algorithm} & \textbf{Complexity}                                                                                                                                                      \\    \hline
			Exhaustive Search  & $\mathcal{O} \left( 2^{MK} I_1 I_2 K^3 \log\left( \epsilon^{-1} \right)\right) $                                                                                         \\
			Original TriMSM    & $\mathcal{O} \left( M^2 K + I_1 I_2 I_3 \left( L+1 \right)^2 K^4 \left( K+1 \right) \log \left( \epsilon^{-1} \right)/2 \right) $                                        \\
			TriMSM + FiPC      & $\mathcal{O} \left( M^2 K + I_3 \left( L+1 \right)^2 K \left( K+1 \right) / 2 + I_1 I_2 K^3 \log\left( \epsilon^{-1} \right) \right) $                                   \\
			TriMSM + QoPC      & $\mathcal{O} \left( M^2 K + I_3 \left( L+1 \right)^2 K^4 \left( K+1 \right) \log \left( \epsilon^{-1} \right)/2 + I_1 I_2 K^3 \log\left( \epsilon^{-1} \right) \right) $ \\
			TriMSM + EIPC      & $\mathcal{O} \left( M^2 K + I_3 \left( L+1 \right)^2 K \left( K+1 \right) / 2 + I_1 I_2 K^3 \log\left( \epsilon^{-1} \right) \right) $                                   \\ \hline
			\bottomrule
		\end{tabular}
	}
\end{table}

\section[Simulation Results]{Simulation Results\footnote{Note that some missing data points in our simulation results are due to the absence of feasible solutions to problem $\mathcal{P}$.}} \label{sec:simulation_results}

In this section, we conduct comprehensive simulations to illustrate the energy efficiency advantages of FD-RAN and our proposed algorithms.
\subsection{Simulation Setups}
\begin{table}[!t]
	\centering
	\caption{Simulation Parameters \cite{zhaoFullydecoupledRadioAccess2023,debaillieFlexibleFutureProofPower2015,yuFullydecoupledRadioAccess2023,limEnergyefficientBestselectRelaying2012,fioraniModelingEnergyPerformance2016,lopez-perezSurvey5GRadio2022,dessetFlexiblePowerModeling2012,vanchienJointPowerAllocation2020,auerHowMuchEnergy2011}}
	\label{tab:simulation_parameters}
	\renewcommand\arraystretch{1.1} 
	\resizebox{\linewidth}{!}{ 
		\begin{tabular}{c|c|c|c|c|c}
			\toprule
			\hline
			\textbf{Parameter}            & \textbf{Value}  & \textbf{Parameter}              & \textbf{Value} & \textbf{Parameter}                       & \textbf{Value} \\ \hline
			$L$                           & 3               & $\tau_c$                        & 190            & $\tau_p$                                 & 10             \\
			$P_k^p$, $P_{\max}$           & 100 mW          & $\sigma^2$                      & $-94$ dBm      & $R_{k, \min}$                            & 20 Mbps        \\
			$N_{\mathrm{ref}}$            & 1               & $B_{\mathrm{ref}}$              & 20 MHz         & $Q_{\mathrm{ref}}$                       & 24 bit         \\
			$Se_{\mathrm{ref}}$           & 6 bps/Hz        & $Ld_{\mathrm{ref}}$             & 100 \%         & $St_{\mathrm{ref}}$                      & 1              \\
			$N (N_{\mathrm{act}})$        & 5               & $B (B_{\mathrm{act}})$          & 20 MHz         & $Q_{\mathrm{act}}$                       & 24 bit         \\
			$Se_{\mathrm{act}}$           & 6 bps/Hz        & $Ld_{\mathrm{act}}$             & 100 \%         & $St_{\mathrm{act}}$                      & 1              \\
			$\nu_p$                       & $6 \times 10^5$ & $N_i^s$                         & 1              & $\sigma_{\mathrm{MS}}$                   & 0.1            \\
			$\sigma_{\mathrm{DC}}$        & 0.05            & $\sigma_{\mathrm{CO}}$          & 0              & $\psi_d$                                 & 80\%           \\
			$\eta_s$                      & 10\%            & $P_i^{\mathrm{FH_{fix}}}$       & 0.825 W        & $\Delta_i^{\mathrm{FH}_{\mathrm{trf}}} $ & 0.25 W/Gbps    \\
			$\kappa$                      & 1               & $\zeta$                         & 2              & $\lambda$                                & 5              \\
			$\xi$                         & 2               & $\varrho$                       & 2              & $\sigma_{\mathrm{CO}}^{\prime}$          & 0.1            \\
			$\Delta_k^{\mathrm{UE_{pa}}}$ & 2.6             & $P_k^{\mathrm{UE}_\mathrm{cp}}$ & 1.31 W         & $R_{k, \mathrm{ref}}$                    & 40   Mbps      \\ \hline
			\bottomrule
		\end{tabular}
	}
\end{table}
The considered uplink FD-RAN scenario aligns with the network model outlined in Section~\ref{sec:network_model}. Three types of BSs are randomly distributed within a 500m $\times$ 500m square using the wrap-around method, and channel model described in \cite{zhaoFullydecoupledRadioAccess2023} is employed.
As our analysis focuses on the variable power consumption of uplink data transmission, the constant power of the CBS $P^{\text{CBS}}$ is excluded from the simulation.
The reference power tables for the RF unit and the BBU are provided in \cite[Table \uppercase\expandafter{\romannumeral3} and Table \uppercase\expandafter{\romannumeral4}]{debaillieFlexibleFutureProofPower2015}, with their corresponding scaling factors detailed in \cite[Table \uppercase\expandafter{\romannumeral3} and Table \uppercase\expandafter{\romannumeral4}]{debaillieFlexibleFutureProofPower2015}. The remaining simulation parameters are summarized in Table~\ref{tab:simulation_parameters}.

We consider the following benchmarks regarding the algorithms: the three-step access procedure (TSAP), where the neighborhood UBSs are defined as those with 30\% of the maximum large-scale fading \cite{bjornsonScalableCellFreeMassive2020}; the RECP with $\delta=95$ \cite{ngoTotalEnergyEfficiency2018}; and the largest-large-scale-fading-based selection (LLSF) \cite{pintoantonioliEnergyEfficiencyCellfree2022a}. Besides, the no BS sleeping version of TriMSM with EIPC (NoS-TriMSM) is evaluated. Notably, these algorithms are employed alongside \algref{alg:SLMDB} to establish benchmarks.
For the benchmarks of architectures, we exclusively focus on the uplink power aspects and employ identical configurations to those of FD-RAN, highlighting the differing characteristics.
The cellular network uses single-connectivity, lacks centralized gain, and requires cooling at the BSs. The total antenna count matches that of the FD-RAN, yet this network consists of only 4 distributed BSs.
The small-cell network, it also employs single-connectivity and lacks centralized gain.
The cell-free network, similar in lacking centralized gain, has two implementations: full connection (F-Cell-Free) and UE-centric (UC-Cell-Free). F-Cell-Free establishes full associations between all UEs and BSs, while UC-Cell-Free mirrors associations as in FD-RAN.
Additionally, UC-Cell-Free considers coupled uplink and downlink. We assume that idle UBSs can enter sleep mode with a probability from 0 to 1, indicating the impact of downlink transmission on UBSs. This defines the UC-Cell-Free Region, depicted with green shading, where a solid green line represents a probability of 0.5. This illustration is shown in \figref{fig:2_EE_vs_architecture} and \figref{fig:6b_varyingtraffic_vs_arc}.

\subsection{UBS Sleeping and UE Association}
\begin{figure}[htbp]
	\centering
	\centerline{\includegraphics[width=0.7\linewidth]{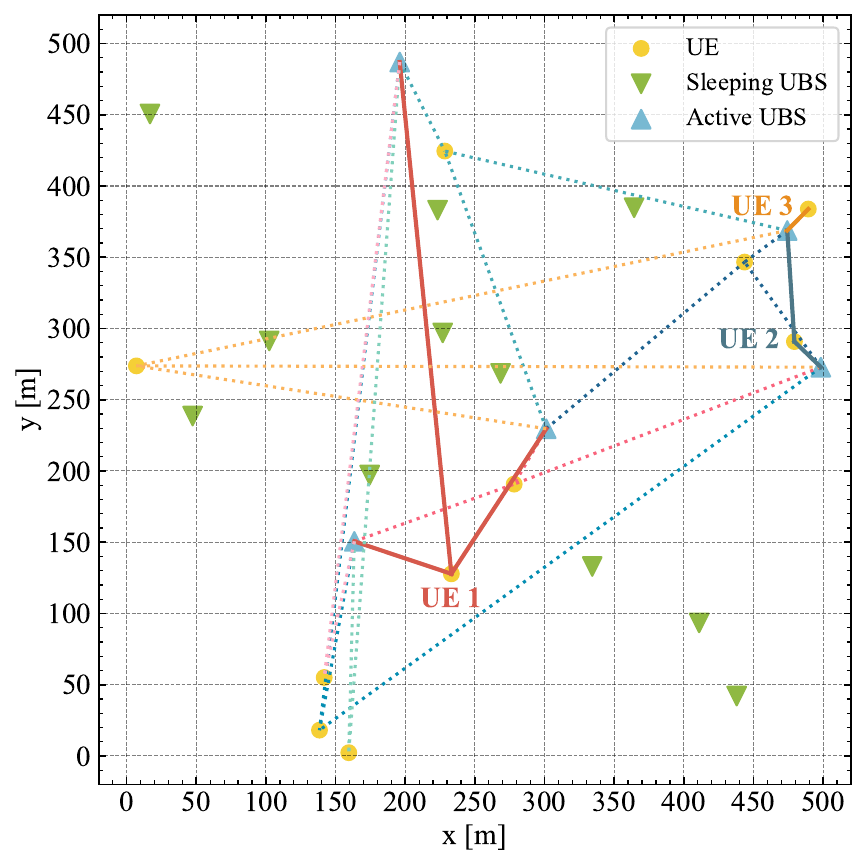}}
	\captionsetup{width=.7\linewidth}
	\caption{UBS sleeping and UE associations in FD-RAN ($M=16$, $K=10$).}
	\label{fig:1_UBS_UE_association}
\end{figure}
We present a visual representation of UE association and UBS sleeping in \figref{fig:1_UBS_UE_association}. Facilitated by the proposed TriMSM algorithm, a cluster of UBSs is assigned to serve each UE, strategically placing underutilized UBSs into sleep mode to conserve energy. Notably, cluster size varies and is capped at $L$, aiming at maximizing energy efficiency.

\subsection{Energy Efficiency versus Different Architectures}
\begin{figure*}[!t]
	\centering
	\begin{minipage}{\linewidth}
		\centering
		\captionsetup{width=.33\linewidth}
		\subfloat[CDF curves of energy efficiency ($M=16$, $K=5$).\label{fig:2a_EE_CDF}]{
			\includegraphics[width=0.33\linewidth]{./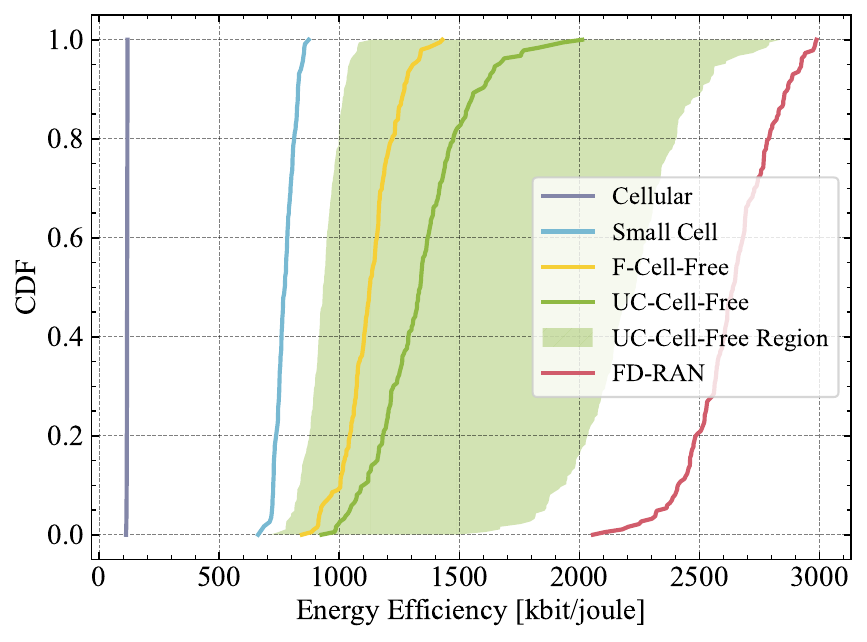}
		}
		\subfloat[Average energy efficiency versus different number of UEs ($M=16$). \label{fig:2b_EEvsK}]{
			\includegraphics[width=0.33\linewidth]{./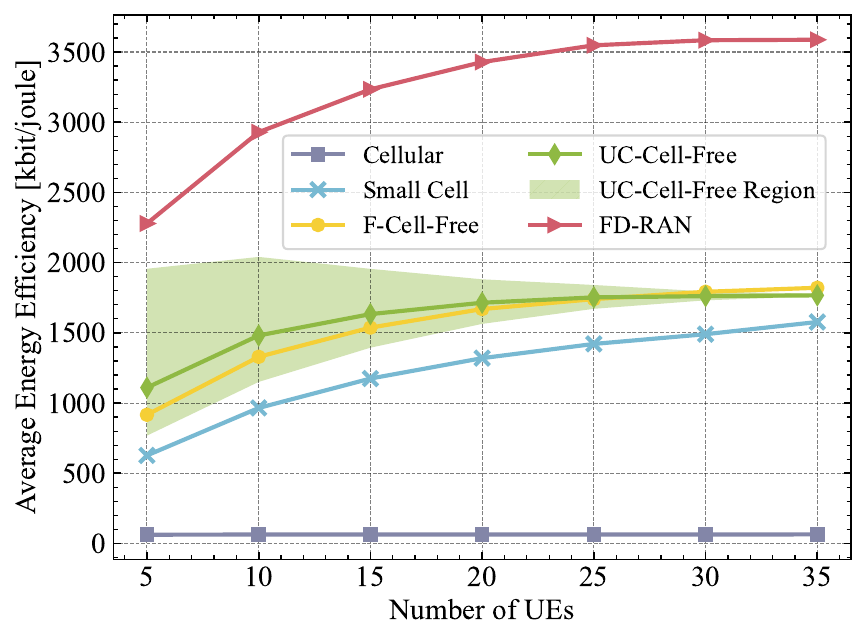}
		}
		\subfloat[Average energy efficiency versus different number of UBSs ($K=10$). \label{fig:2d_EEvsM}]{
			\includegraphics[width=0.33\linewidth]{./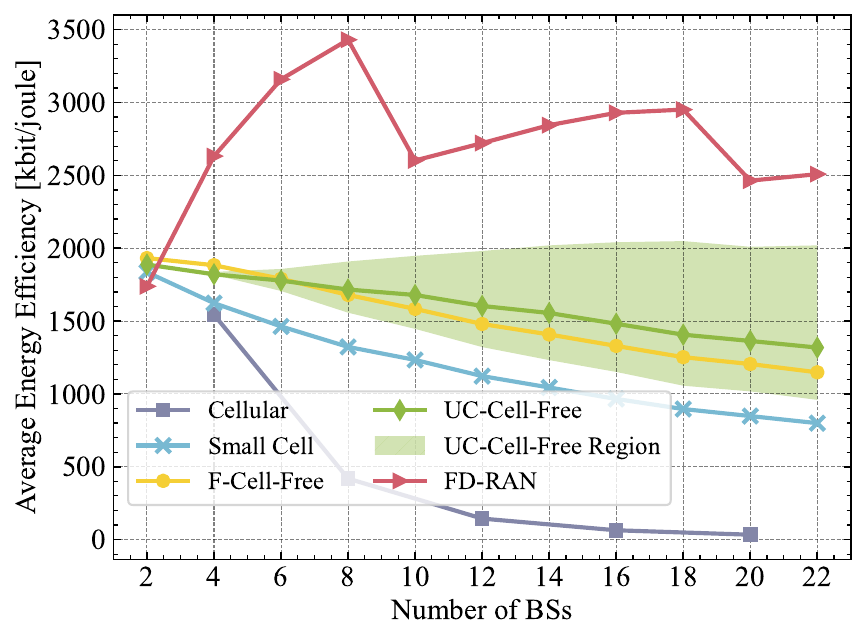}
		}
		\captionsetup{width=\linewidth}
		\caption{Energy efficiency versus different architectures  (using the TriMSM+EIPC algorithm).}
		\label{fig:2_EE_vs_architecture}
	\end{minipage}
\end{figure*}
\begin{figure*}[!t]
	\centering
	\captionsetup{width=.35\linewidth}
	\begin{minipage}{0.35\linewidth}
		\centering
		\includegraphics[width=\linewidth]{./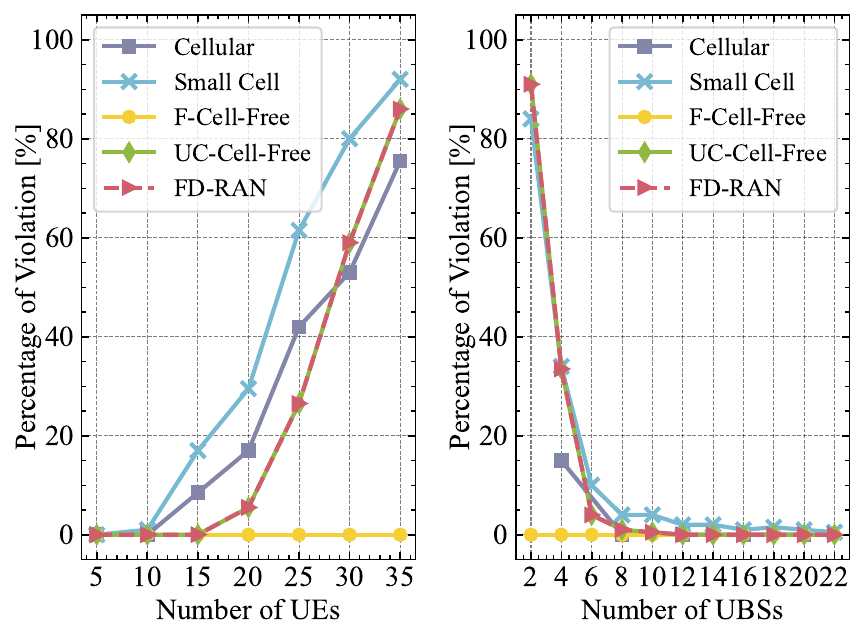}
		\caption{QoS violation percentage versus different architectures (with different number of UEs ($M=16$) and UBSs ($K=10$)).}
		\label{fig:2ce_violation_rate_vs_KM}
	\end{minipage}
	\begin{minipage}{0.35\linewidth}
		\centering
		\includegraphics[width=\linewidth]{./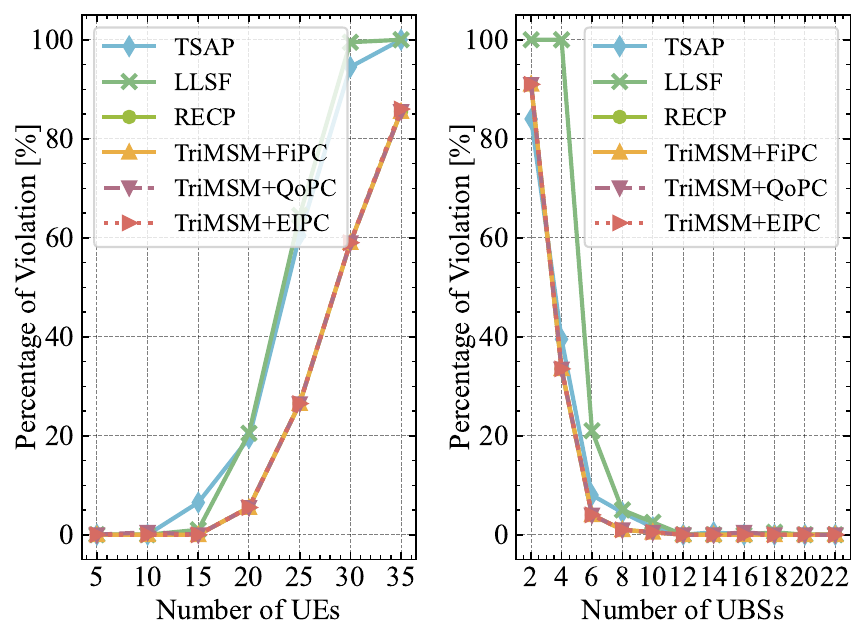}
		\caption{QoS violation percentage versus different algorithms (with different number of UEs ($M=16$) and UBSs ($K=10$)).}
		\label{fig:4ce_violation_rate_vs_KM}
	\end{minipage}
\end{figure*}

\figref{fig:2_EE_vs_architecture} illustrates the energy efficiency versus different architectures.
The cumulative distribution function (CDF) curves for energy efficiency are displayed in \figref{fig:2a_EE_CDF}.
FD-RAN consistently outperforms all other existing architectures, exhibiting energy efficiency values 22.7, 3.40, 2.34, and 1.97 times higher than those of cellular, small cell, F-Cell-Free, and UC-Cell-Free, respectively. Notably, even in the best-case of UC-Cell-Free, the FD-RAN architecture maintains a significant 18.9\% advantage in energy efficiency.
\figref{fig:2b_EEvsK} presents the average energy efficiency with varying numbers of UEs. As the number of UEs increases, the energy efficiency initially rises but gradually reaches a saturation point in most cases. This is primarily due to the almost full utilization of UBSs and the saturation of UE rate caused by limited network capacity.
Notably, FD-RAN consistently outperforms the other architectures, with its superiority becoming even more pronounced as the number of UEs increases. This advantage stems from its centralized gains of BBUs.
\figref{fig:2d_EEvsM} illustrates the average energy efficiency versus the number of UBSs\footnote{In the case of cellular architecture, the number of BSs is fixed at 4. Consequently, the line depicted in the figure represents changes in energy efficiency as the number of antennas varies while maintaining the total number of antennas equal to that of FD-RAN. As a result, we can only represent cases that are multiples of 4.}. Generally, as the number of BSs increases, the energy efficiency of most architectures decreases.
However, in UC-Cell-Free and FD-RAN, the introduction of BS sleeping helps effectively manage the rising power consumption of additional BSs, resulting in higher energy efficiency. This effect of BS sleeping is evident as shown in the UC-Cell-Free Region.
The sharp drop of energy efficiency in FD-RAN will be explained in Section~\ref{sec:EE_vs_alg}.
Furthermore, FD-RAN consistently outperforms other architectures except in cases with 2 UBSs ($M=2$). This deviation can be attributed to the additional cooling energy required in centralized BBU but the diminished centralized gain, only when $M$ is quite small.
Based on the findings, FD-RAN's remarkable energy efficiency stems from a flexible BS sleeping mechanism, enabled by its fully decoupled architecture, multi-connectivity, and centralized gain.

\begin{figure*}[!t]
	\centering
	\captionsetup{width=.32\linewidth}
	\begin{minipage}{0.32\linewidth}
		\centering
		\includegraphics[width=\linewidth]{./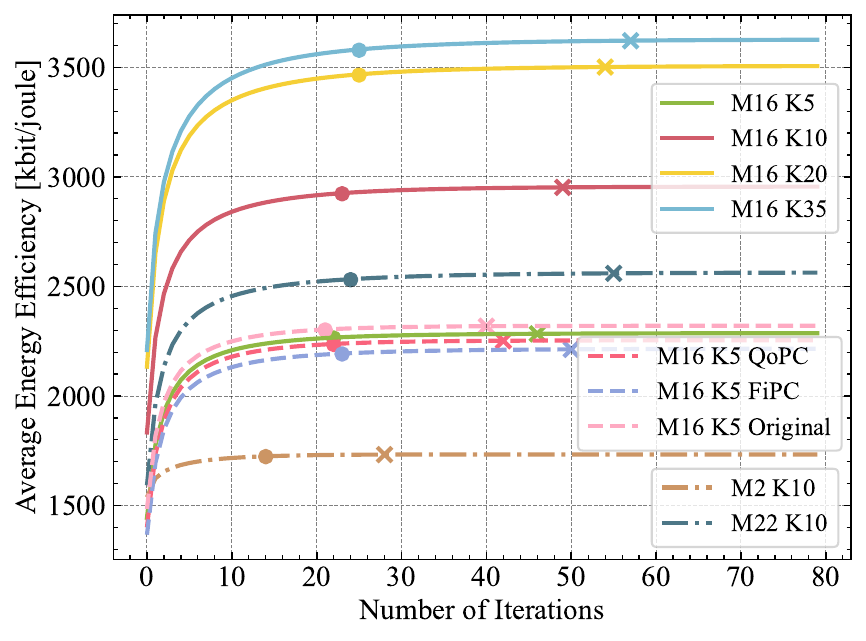}
		\caption{Convergence of SLMDB algorithm versus different algorithms, number of UEs and UBSs.}
		\label{fig:3a_SLMDB_Convergence}
	\end{minipage}
	\begin{minipage}{0.33\linewidth}
		\centering
		\includegraphics[width=\linewidth]{./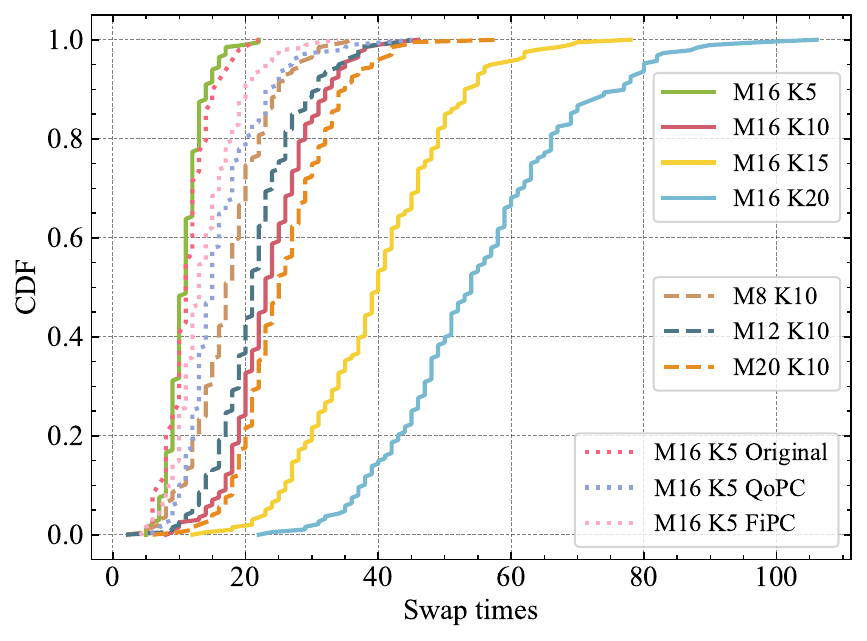}
		\caption{Swap times of TriMSM algorithm versus different algorithms, number of UEs and UBSs.}
		\label{fig:3b_swap_times}
	\end{minipage}
	\begin{minipage}{0.33\linewidth}
		\centering
		\includegraphics[width=\linewidth]{./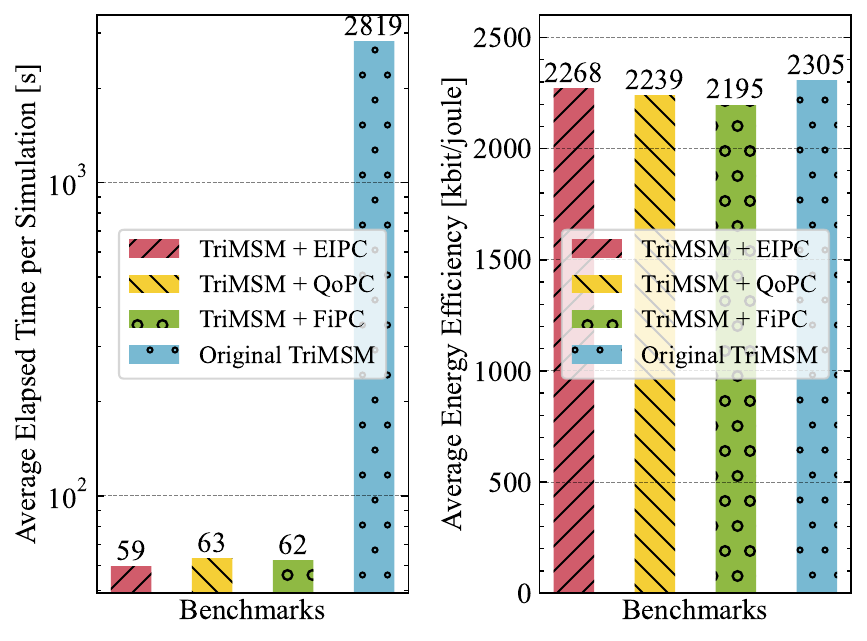}
		\caption{Average elapsed time (left) and average energy efficiency (right) comparison of TriMSM algorithms ($M=16$, $K=5$).}
		\label{fig:3c_run_time_performance}
	\end{minipage}
\end{figure*}
\begin{figure*}[!t]
	\centering
	\captionsetup{width=.33\linewidth}
	\begin{minipage}{\linewidth}
		\centering
		\captionsetup{width=.32\linewidth}
		\subfloat[CDF curves of energy efficiency ($M=16$, $K=5$).\label{fig:4a_EE_CDF}]{
			\includegraphics[width=0.33\linewidth]{./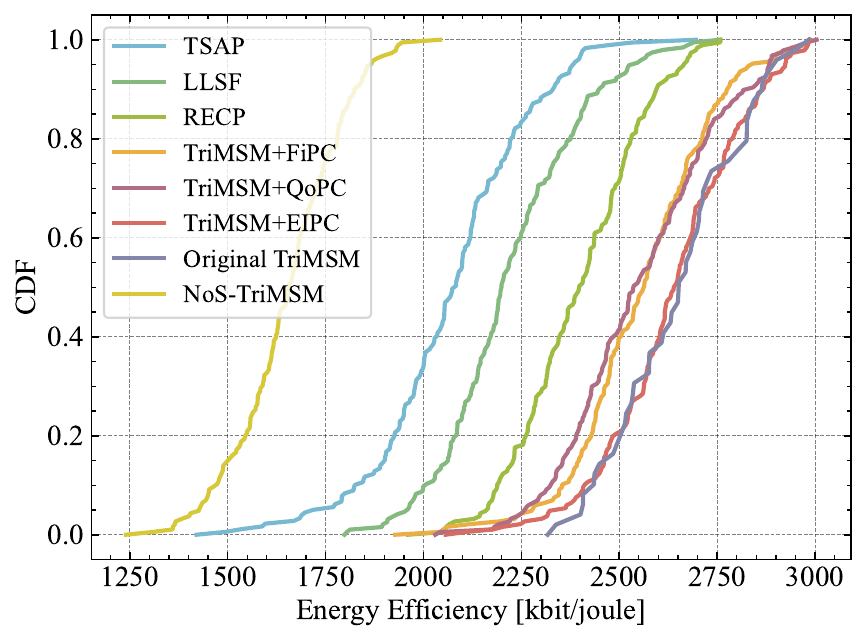}
		}
		\subfloat[Average energy efficiency versus different number of UEs ($M=16$).\label{fig:4b_EEvsK}]{
			\includegraphics[width=0.33\linewidth]{./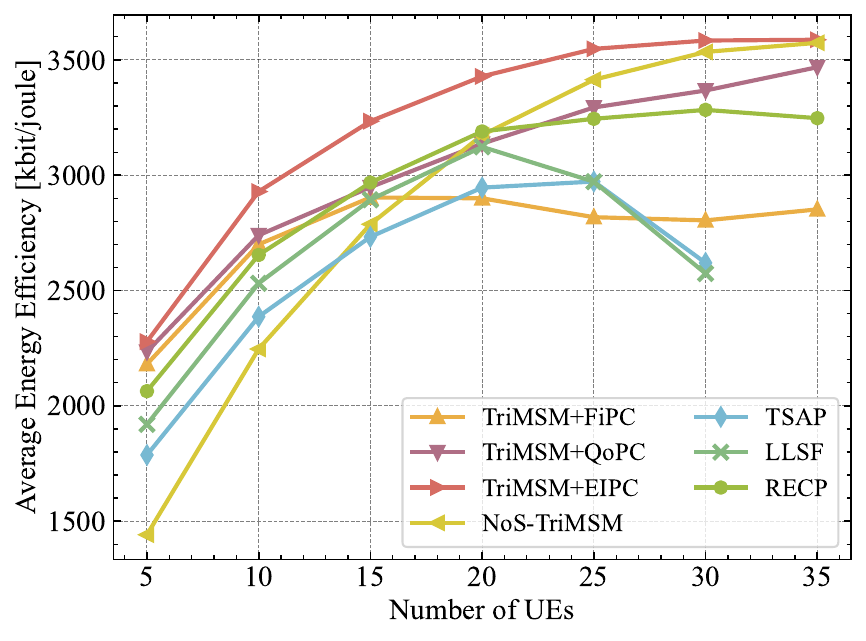}
		}
		\subfloat[Average energy efficiency versus different number of UBSs ($K=10$).\label{fig:4d_EEvsM}]{
			\includegraphics[width=0.33\linewidth]{./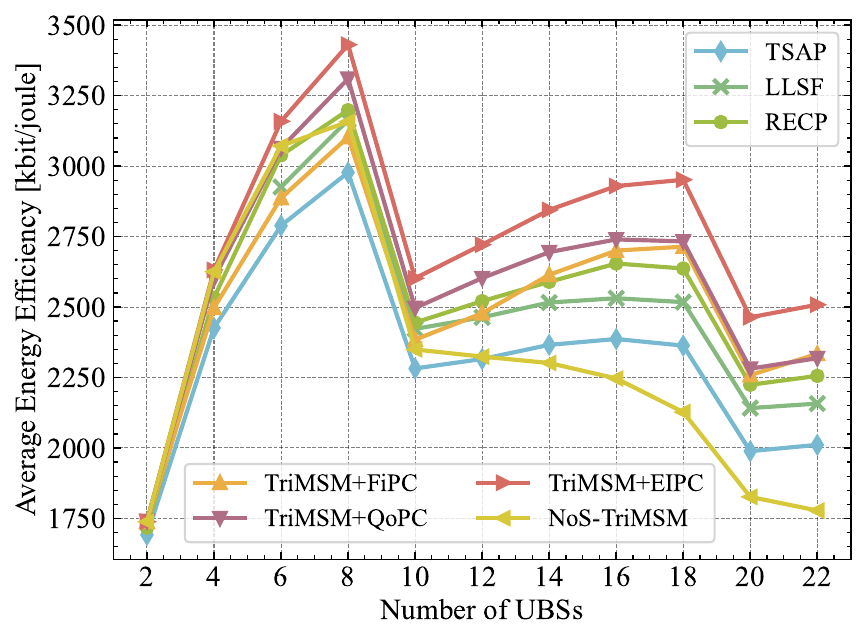}
		}
		\captionsetup{width=\linewidth}
		\caption{Energy efficiency versus different algorithms.}
		\label{fig:4_EE_vs_alg}
	\end{minipage}
\end{figure*}

\figref{fig:2ce_violation_rate_vs_KM} illustrates the QoS violations across different architectures. While F-Cell-Free delivers a commendable performance, it neglects to address the constraints \eqref{P:st_BS_connect} and \eqref{P:st_UE_connect}, rendering it impractical.
FD-RAN consistently provides superior QoS guarantees in most cases, while UC-Cell-Free benefits from mirrored UE association. However, in resource-constrained settings (e.g., $K=30$ and $35$ for $M=16$, and $M=4$ for $K=10$), cellular networks perform better. This is because, compared to single-connectivity in cellular, multi-connectivity in such scenario yields less gain and can result in uneven resource distribution when maximizing energy efficiency.

\subsection{Effectiveness of Proposed Algorithms}
\figref{fig:3a_SLMDB_Convergence} illustrates the convergence curves of the SLMDB algorithm across different power control algorithms, numbers of UEs and UBSs. Solid and cross markers denote energy efficiency variation points of 1e-3 and 1e-4, respectively. The lines depict convergence behavior in TriMSM with EIPC unless stated otherwise.
Various power control algorithms in TriMSM show similar convergence patterns, with original TriMSM having the best and FiPC the slowest convergence.
Although SLMDB's convergence slows with more UEs or UBSs, the impact remains relatively minor. Typically, about 20 steps suffice to reach the 1e-3 point. Therefore, the SLMDB algorithm demonstrates rapid convergence and robustness across various scenarios.

\figref{fig:3b_swap_times} shows the CDF of swap times for the TriMSM algorithm across various power control algorithms, numbers of UEs, and UBSs. The lines depict convergence behavior in TriMSM with EIPC unless stated otherwise. Notably, TriMSM with EIPC and original TriMSM exhibit similar swap times, generally with the fewest swaps, while FiPC has more swaps, and QoPC has the most.
Despite the increasing UEs or UBSs, swap times in all cases stay within acceptable limits, averaging at most 55 swaps. Thus, \figref{fig:3b_swap_times} highlights manageable swap times for the TriMSM algorithm, even in large-scale network environments.

To evaluate computational time, we compare the original TriMSM algorithm with three low-complexity alternatives, employing only a single processing core\footnote{Notably, the TriMSM algorithm is inherently parallel, utilizing multiple MATLAB cores, which can significantly reduce running times.}.
As depicted in \figref{fig:3c_run_time_performance}, the results reveal that the original TriMSM algorithm suffers from notably high elapsed times, rendering it unsuitable for large-scale network applications, even when considering parallelization capabilities.
In contrast, the three low-complexity alternatives demonstrate significantly reduced running times with nearly comparable performance. Among them, EIPC emerges as the top performer, achieving a 47.5-fold reduction in running time with just a 1.61\% loss in performance. The other two alternatives also show lower but notably good performance.

\subsection{Energy Efficiency versus Different Algorithms} \label{sec:EE_vs_alg}
\begin{figure*}[!t]
	\centering
	\captionsetup{width=.33\linewidth}
	\begin{minipage}{.33\linewidth}
		\centering
		\includegraphics[width=\linewidth]{./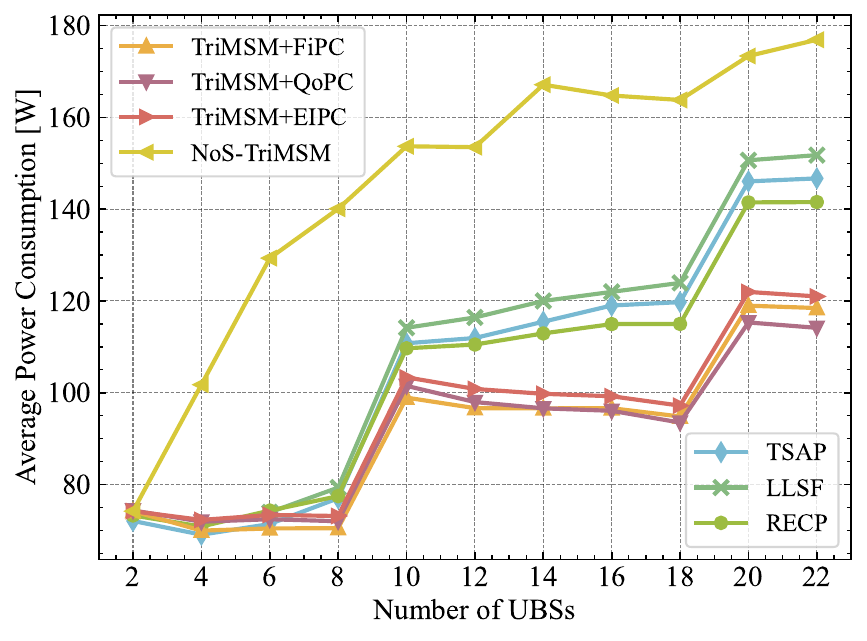}
		\caption{Average power consumption of algorithms versus different number of UBSs ($K=10$).}
		\label{fig:4g_PvsM}
	\end{minipage}
	\begin{minipage}{0.66\linewidth}
		\centering
		\subfloat[Different number of UEs ($M=16$).\label{fig:5a_activenum_vs_K}]{
			\includegraphics[width=.5\linewidth]{./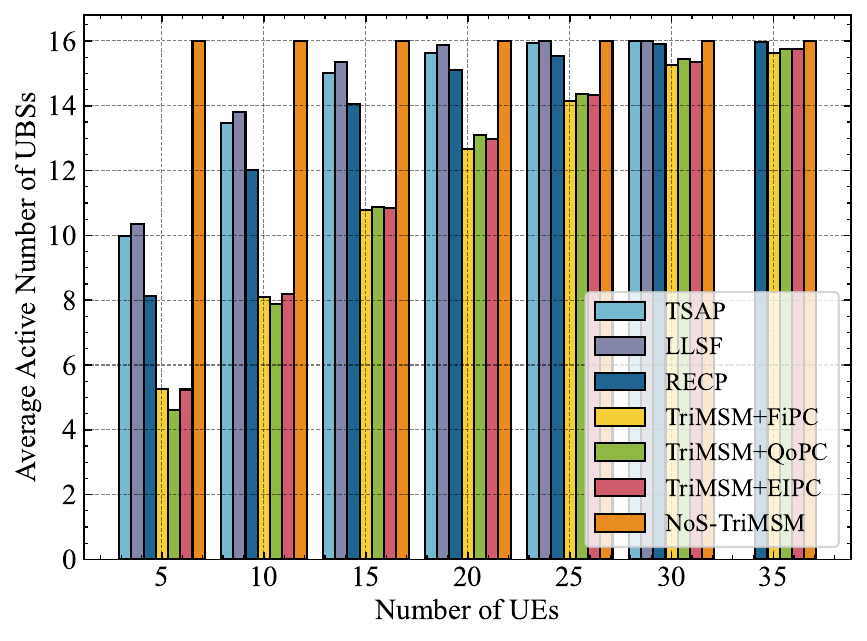}
		}
		\subfloat[Different number of UBSs ($K=10$).\label{fig:5b_activenum_vs_M}]{
			\includegraphics[width=.5\linewidth]{./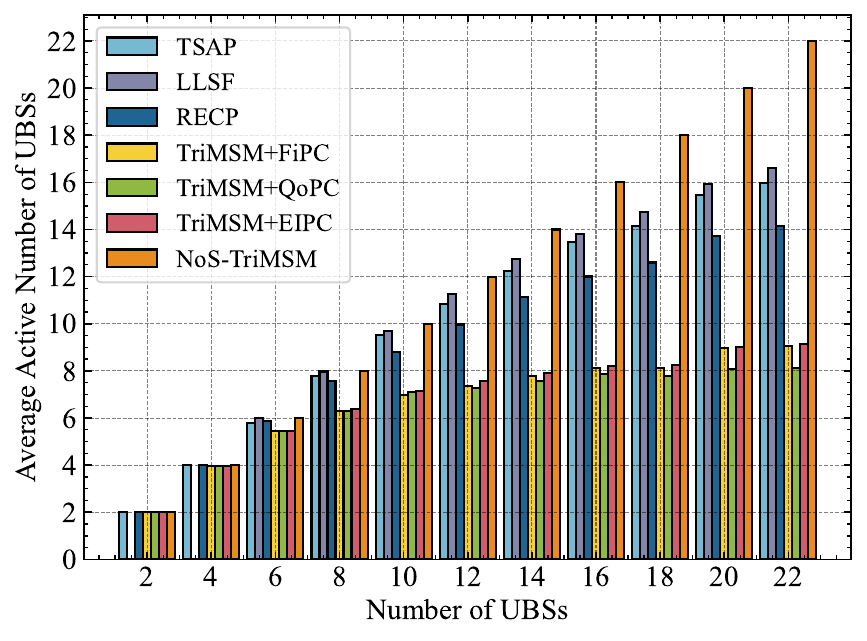}
		}
		\captionsetup{width=.66\linewidth}
		\caption{Average active number of UBSs versus different algorithms.}
		\label{fig:5_Ave_num_vs_alg}
	\end{minipage}
\end{figure*}
\begin{figure*}[!t]
	\centering
	\captionsetup{width=.35\linewidth}
	\begin{minipage}{\linewidth}
		\centering
		\subfloat[Comparison of different architectures (using the TriMSM+EIPC algorithm).\label{fig:6b_varyingtraffic_vs_arc}]{
			\includegraphics[width=0.35\linewidth]{./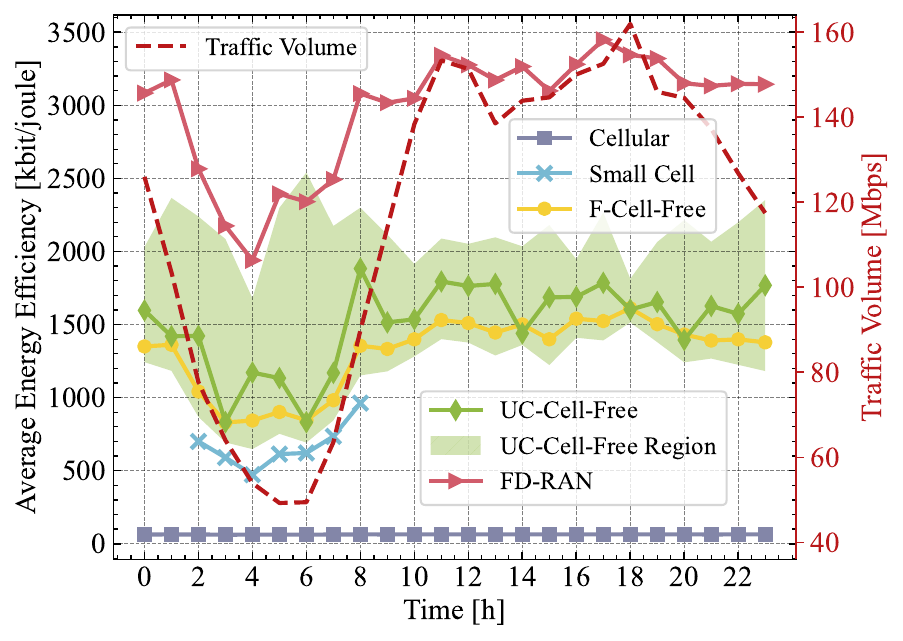}
		}
		\subfloat[Comparison of different algorithms.\label{fig:6a_varyingtraffic_vs_alg}]{
			\includegraphics[width=0.35\linewidth]{./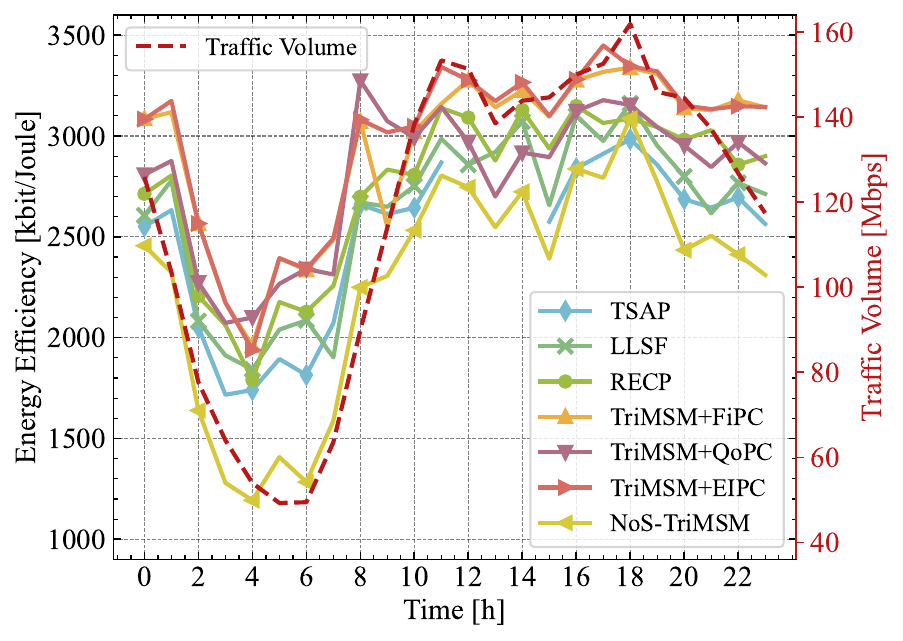}
		}
		\captionsetup{width=\linewidth}
		\caption{Energy efficiency versus real traffic.}
		\label{fig:6_varyingtraffic}
	\end{minipage}
\end{figure*}

\figref{fig:4_EE_vs_alg} illustrates the energy efficiency versus different algorithms.
\figref{fig:4a_EE_CDF} displays the CDF curves of energy efficiency, revealing that TriMSM algorithms exhibit the best performance, followed by peer algorithms, with the no-sleep algorithm performing the worst. The worst performing proposed algorithm achieves energy efficiency 6.60\% and 23.4\% higher than the best and worst peer algorithms, respectively. Notably, NoS-TriMSM exhibits the lowest efficiency, showing a 1.59-fold decrease in energy efficiency compared to TriMSM with sleeping, emphasizing the benefits of BS sleeping. Within the TriMSM algorithms, the original TriMSM demonstrates the best performance, while EIPC shows almost identical performance. FiPC and QoPC exhibit a minor performance gap.
The average energy efficiency concerning the number of UEs is illustrated in \figref{fig:4b_EEvsK}.
The energy efficiency of algorithms initially rises and then declines with an increasing number of UEs, except for the TriMSM algorithms. This trend is attributed to excessive and redundant UBS utilization in peer algorithms, evident in \figref{fig:5a_activenum_vs_K}.
TriMSM with EIPC consistently demonstrates the highest energy efficiency, with NoS-TriMSM gradually approaching it as the number of UEs increases due to nearly all UBS utilization.
However, TriMSM with FiPC and QoPC exhibits less satisfactory energy efficiency. As depicted in \figref{fig:5b_activenum_vs_M}, the UBS utilization among TriMSM algorithms is similar, suggesting that the decline in performance stems from their suboptimal power control strategies.
In \figref{fig:4d_EEvsM}, the average energy efficiency concerning the number of UBSs is depicted. There is a significant decline in energy efficiency when the number of UBSs increases tenfold, linked to FD-RAN's power consumption, as evident in \figref{fig:4g_PvsM}. This phenomenon can be explained by \eqref{eq:power_EC_new}, where every $\lambda \zeta = 10$ points experience a notable surge in power consumption.
The TriMSM serial algorithms exhibit the highest growth rates of energy efficiency within the small intervals, with EIPC consistently being the most efficient choice. This superiority stems from effective BS sleeping and centralized gain, leading to reduced power consumption, as evidenced in \figref{fig:4g_PvsM}. Notably, TriMSM with FiPC exhibits subpar energy efficiency in heavy-load scenarios ($M < 14$), whereas NoS-TriMSM performs well under heavy loads ($M < 6$) but experiences declining performance as the load decreases.

\figref{fig:5_Ave_num_vs_alg} presents the average active number of UBSs across different algorithms. NoS-TriMSM engages all available UBSs, while TSAP, LLSF, and RECP successively utilize fewer UBSs. Conversely, our proposed TriMSM algorithms prioritize the sleeping of most UBSs. Notably, the most efficient TriMSM variant, EIPC, doesn't feature the fewest active UBSs. Therefore, optimizing energy efficiency should consider a balance between power consumption and achievable UE rates, rather than merely minimizing the number of active UBSs.
In \figref{fig:5a_activenum_vs_K}, as the number of UEs increases, more UBSs are utilized. Unlike peer algorithms constantly requiring numerous active UBSs, TriMSM algorithms can efficiently put more UBSs to sleep, dynamically adapting to the number of UEs.
Meanwhile, \figref{fig:5b_activenum_vs_M} illustrates a rise in active UBSs concerning UBS number. Notably, the growth rate of TriMSM algorithms is notably restrained compared to the near-linear increments seen in peer algorithms. This observation sheds light on their superior energy efficiency, as demonstrated in \figref{fig:4_EE_vs_alg}.

\figref{fig:4ce_violation_rate_vs_KM} depicts QoS violations across various algorithms.
LLSF registers the highest QoS violation rate, while TSAP shows fewer violations. Notably, both RECP and TriMSM algorithms share the same lowest percentage of QoS violations, since RECP serves as the foundational algorithm for TriMSM algorithms. This emphasizes the effectiveness of TriMSM in maintaining QoS.

\subsection{Energy Efficiency versus Real Traffic}
We use the dataset from \cite{zhangDeepTransferLearning2019}, comprising real-world spatiotemporal traffic data. The dataset consists of 100$\times$100 cells over 62 days, recorded at hourly intervals. To assess FD-RAN and our proposed algorithms' real-world performance while maintaining generality, we aggregate the traffic from these cells into 5$\times$5 regions, partitioning the simulation area accordingly. If a region's traffic constitutes less than 20\% of the total, its traffic is set to 0. Each region is represented by a single UE placed at its center. Moreover, the total traffic across regions fluctuates over time, shown by the dotted line in \figref{fig:6_varyingtraffic}, with maximum traffic scaled to 160 Mbps.

We compare the energy efficiency of different network architectures and algorithms using real-world traffic data in \figref{fig:6b_varyingtraffic_vs_arc} and \figref{fig:6a_varyingtraffic_vs_alg}, respectively.
As shown in Fig. 9a, the energy efficiency fluctuates in response to real traffic variations in all network architectures, however, FD-RAN consistently achieves superior performance compared with other networks.
For comparisons of algorithms, the proposed three TriMSM variants show robust adaptability to real traffic, and EIPC and QoPC generally outperform FiPC. Even under low-traffic conditions, they maintain high energy efficiency. In contrast, NoS-TriMSM, despite its responsiveness to traffic volume, achieves the lowest energy efficiency due to the lack of BS sleeping.

\section{Conclusion} \label{sec:conclusion}
In this paper, we have studied adaptive BS sleeping and resource allocation in a green uplink FD-RAN. We have developed a holistic power consumption model for FD-RAN and defined a maximizing energy efficiency problem. Subsequently, we have decomposed this problem into a power control problem and a joint UE association and BS sleeping problem, which have been tackled by the successive lower-bound maximization-based Dinkelbach's algorithm and the modified many-to-many matching algorithm, along with low-complexity realizations, respectively.
The extensive simulation results have demonstrated the improved energy efficiency of FD-RAN and the effectiveness of the proposed algorithms. These outcomes reveal that the predominant sources of energy efficiency gains in FD-RAN stem from a flexible BS sleeping mechanism enabled by the fully decoupled network architecture, multi-connectivity, and centralized gains.
For future work, we will explore the green potential of downlink FD-RAN based on location-mapping transmission, considering the challenges of delay and feedback overhead.

\begin{appendices}
	\section{Proof of \lemmref{lemma:UBS_power_rate_load}} \label{app:proof_lemma_UBS_power_rate_load}
	\begin{proof}
		Considering that the number of antennas, bandwidth, quantization, spectral efficiency, and streams are typically fixed in the real world, we can express the power consumption of UBSs as a combination of fixed and varying parts, as follows\footnote{Here, we denote $ \sum_{m \in \mathcal{M}} A_m $ as $\sum_{m \in \mathcal{M_A}} $ for brevity.}:
		\begin{align}
			\sum_{m \in \mathcal{M_A}} P_m^{\mathrm{BS}} & = \sum_{m \in \mathcal{M_A}} P_m^{\mathrm{BS_{fix}}} + \sum_{m \in \mathcal{M_A}} \sum_{j \in \mathcal{J}_\mathrm{BBU}} P_m^{\mathrm{BBU}_{j, \mathrm{fix}}} \notag                            \\
			                                             & + \sum_{m \in \mathcal{M_A}} \sum_{j \in \mathcal{J}_\mathrm{BBU}} P_m^{\mathrm{BBU}_{j, \mathrm{ref}}} \left( \frac{Ld_m}{Ld_m^{\mathrm{ref}}} \right)^{s_m^{j, Ld}}, \label{eq:appendix_A_1}
		\end{align}
		where the first two terms represent the fixed energy of the BSs (excluding BBUs) and BBUs, respectively. The last term corresponds to the varying energy of BBUs, which depends on the load and can be further expanded as:
		\begin{align}
			\sum_{m \in \mathcal{M_A}} \left( \sum_{j \in \mathcal{J^{\prime}}_\mathrm{BBU}} P_m^{\mathrm{BBU}_{j, \mathrm{ref}}} + \sum_{j \in \mathcal{J^{\prime\prime}}_\mathrm{BBU}} P_m^{\mathrm{BBU}_{j, \mathrm{ref}}} \frac{Ld_m^{s_m^{j, Ld}}}{{Ld_m^{\mathrm{ref}}}^{s_m^{j, Ld}}}  \right), \label{eq:appendix_A_2}
		\end{align}
		where $s_m^{j, Ld} = 0$ for $j \in \mathcal{J}_{\mathrm{BBU}}^{\prime}$, and $s_m^{j, Ld} = 1$ or $0.5$ for $j \in \mathcal{J}_{\mathrm{BBU}}^{\prime\prime}$. To simplify the expression, we set $s_m^{j, Ld} = 1$ for $j \in \mathcal{J}_{\mathrm{BBU}}^{\prime\prime}$, and the varying part of \eqref{eq:appendix_A_2} can be calculated as:
		\begin{align}
			                 & \sum_{m \in \mathcal{M_A}} \sum_{j \in \mathcal{J}^{\prime\prime}_{\mathrm{BBU}}} P_m^{\mathrm{BBU}_{j, \mathrm{ref}}} \frac{Ld_m}{{Ld_m^{\mathrm{ref}}}} \notag                  \\
			\overset{(a)}{=} & \sum_{j \in \mathcal{J}^{\prime\prime}_{\mathrm{BBU}}} P_m^{\mathrm{BBU}_{j, \mathrm{ref}}} \sum_{m \in \mathcal{M_A}} \frac{R^{\prime}_m}{{R^{\prime}_{m, \mathrm{ref}}}} \notag \\
			\overset{(b)}{=} & \sum_{k \in \mathcal{K}} \frac{R_k}{{R_{k, \mathrm{ref}}}} \sum_{j \in \mathcal{J}^{\prime\prime}_{\mathrm{BBU}}} P_m^{\mathrm{BBU}_{j, \mathrm{ref}}}, \label{eq:appendix_A_3}
		\end{align}
		where step $(a)$ swaps the order of summation and substitutes \eqref{eq:load_definition}, while step $(b)$ is obtained by utilizing \eqref{eq:relation_R_k_R_m} and swapping the order of summation. By combining equations \eqref{eq:appendix_A_1}-\eqref{eq:appendix_A_3}, and denoting the summation of all fixed energy as $\sum_{m \in \mathcal{M_A}} P_m^{\mathrm{BS}_\mathrm{fix}}$ and $\sum_{j \in \mathcal{J}^{\prime\prime}_{\mathrm{BBU}}} P_m^{\mathrm{BBU}_{j, \mathrm{ref}}}$ as $P_{\mathrm{trf}}$, we complete the proof.
	\end{proof}

	\section{Proof of \lemmref{lemm:parametric_problem_optimality_of_FP}} \label{app:proof_lemm_parametric_problem_optimality_of_FP}
	\begin{proof}
		By introducing the auxiliary variable $\pi$, the problem $\mathcal{P}_l$ can be equivalently expressed as:
		\begin{align}
			\max_{\mathbf{P}} ~ \pi \quad \text{s.t.}~\pi - \frac{\sum_{k \in \mathcal{K}} R_k\left( \mathbf{P} \right)}{P_{\mathrm{N}}\left( \mathbf{P}, \left\{ R_k\left( \mathbf{P} \right)  \right\}\right)} \leq 0. \label{eq:appendix_B_1}
		\end{align}
		Note that $P_{\mathrm{N}}\left( \mathbf{P}, \left\{ R_k\left( \mathbf{P} \right)  \right\}\right) > 0$, and thus \eqref{eq:appendix_B_1} can be rewritten as:
		\begin{subequations}
			\begin{align}
				 & \max_{\mathbf{P}} ~ \pi                                                                                                                                                \\
				 & \text{s.t.}~\pi P_{\mathrm{N}}\left( \mathbf{P}, \left\{ R_k\left( \mathbf{P} \right)  \right\}\right) - \sum_{k \in \mathcal{K}} R_k\left( \mathbf{P} \right) \leq 0.
			\end{align}
		\end{subequations}
		As demonstrated in \cite{schaibleFractionalProgramming1983}, the above problem is equivalent to finding the root of the following nonlinear function:
		\begin{align}
			F\left( \pi \right) = \max_{\mathbf{P}} ~ \sum_{k \in \mathcal{K}} R_k\left( \mathbf{P} \right) - \pi P_{\mathrm{N}}\left( \mathbf{P}, \left\{ R_k\left( \mathbf{P} \right)  \right\}\right),
		\end{align}
		thus, the condition for global optimality is given by:
		\begin{align}
			F\left( \pi^\ast \right) = 0.
		\end{align}
		This completes the proof.
	\end{proof}

	\section{Proof of \lemmref{lemm:EE_lower_bound}} \label{app:proof_lemm_EE_lower_bound}
	\begin{proof}
		The functions $\hat{f}_k \left( \mathbf{P}, \mathbf{P}_0 \right)$ and $\hat{g}_k \left( \mathbf{P}, \mathbf{P}_0 \right)$ represent the first-order Taylor approximations of $f_k \left( \mathbf{P} \right)$ and $g_k \left( \mathbf{P} \right)$ at the point $\mathbf{P}_0$, respectively. According to the properties of concave functions, we have $\hat{f}_k \left( \mathbf{P}, \mathbf{P}_0 \right) \geq f_k \left( \mathbf{P} \right)$ and $\hat{g}_k \left( \mathbf{P}, \mathbf{P}_0 \right) \geq g_k \left( \mathbf{P} \right)$. Combining equations \eqref{eq:rate}, \eqref{eq:hat_rate}, and \eqref{eq:overline_rate}, we obtain $\hat{R} \left( \mathbf{P}, \mathbf{P}_0 \right) \geq R_k \left( \mathbf{P} \right)$ and $\overline{R} \left( \mathbf{P}, \mathbf{P}_0 \right) \leq R_k \left( \mathbf{P} \right)$. Since $R_k \left( \mathbf{P} \right) \geq 0$ and $P_N \left( \mathbf{P} \right) \geq 0$, where $P_N \left( \mathbf{P} \right)$ is an affine function of $R_k \left( \mathbf{P} \right)$, we can easily deduce that $\overline{\mathrm{EE}}\left( \mathbf{P}, \mathbf{P}_0 \right) \leq \mathrm{EE} \left( \mathbf{P} \right)$.
		Furthermore, the equality of $\hat{f}_k \left( \mathbf{P}, \mathbf{P}_0 \right) \geq f_k \left( \mathbf{P} \right)$ and $\hat{g}_k \left( \mathbf{P}, \mathbf{P}_0 \right) \geq g_k \left( \mathbf{P} \right)$ holds if and only if $\mathbf{P} = \mathbf{P}_0$. Therefore, under the same condition, the equality $\overline{\mathrm{EE}}\left( \mathbf{P}, \mathbf{P}_0 \right) = \mathrm{EE} \left( \mathbf{P} \right)$ also holds.
	\end{proof}

	\section{Proof of \theoref{theo:SLM_stationary_convergent}} \label{app:proof_theo_SLM_stationary_convergent}
	\begin{proof}
		We can derived the following results based on \lemmref{lemm:EE_lower_bound}:
		\begin{subequations} \label{eq:EE_lower_bound_property}
			\begin{align}
				 & \overline{\mathrm{EE}}\left( \mathbf{P}, \mathbf{P} \right) = \mathrm{EE}\left( \mathbf{P} \right),~\forall \mathbf{P}, \label{eq:theorem_1_1}         \\
				 & \overline{\mathrm{EE}}\left( \mathbf{P}, \mathbf{P}^{\prime} \right) \leq \mathrm{EE}\left( \mathbf{P} \right),~\forall \mathbf{P}, \mathbf{P}^\prime,
			\end{align}
		\end{subequations}
		where \eqref{eq:theorem_1_1} reflects the consistency condition in the SCA framework, ensuring that the surrogate function equals the original objective when evaluated at the same point. This condition is essential for convergence.
		Furthermore, the following properties can be easily verified based on the characteristics of $\mathrm{EE}\left( \mathbf{P} \right)$:
		\begin{align}
			 & \frac{\partial \overline{\mathrm{EE}}\left( \mathbf{P}, \mathbf{P}^{\prime} \right)}{\partial \mathbf{P}} \bigg|_{\mathbf{P} \to \mathbf{P}^{\prime}} = \frac{\partial \mathrm{EE}\left( \mathbf{P} \right)}{\partial \mathbf{P}} \bigg|_{\mathbf{P} \to \mathbf{P}^{\prime}}, ~ \forall \mathbf{P}, \tag{\ref{eq:EE_lower_bound_property}{c}} \\
			 & \overline{\mathrm{EE}}\left( \mathbf{P},\mathbf{P}^{\prime} \right) \text{ is continuous in } \left( \mathbf{P}, \mathbf{P}^{\prime} \right), \tag{\ref{eq:EE_lower_bound_property}{d}}                                                                                                                                                        \\
			 & \text{Level set of } \mathrm{EE}\left( \mathbf{P} \right) \text{ is compact}. \tag{\ref{eq:EE_lower_bound_property}{e}}
		\end{align}
		According to \cite[Theorem 1]{razaviyaynUnifiedConvergenceAnalysis2013} and \cite[Corollary 1]{razaviyaynUnifiedConvergenceAnalysis2013}, when \cite[Assumption 1]{razaviyaynUnifiedConvergenceAnalysis2013} holds and the level set of $\mathrm{EE}\left( \mathbf{P} \right)$ is compact, \textit{\textbf{Theorem 1}} holds.
	\end{proof}
\end{appendices}

\bibliographystyle{IEEEtran}
\bibliography{/mnt/d/Literature/Library}

\end{document}